\pgfplotsset{width=5\columnwidth /5, compat = 1.13,
	height = 60\columnwidth /100, grid= major,
	legend cell align = left, ticklabel style = {font=\scriptsize},
	every axis label/.append style={font=\small},
	legend style = {font={\scriptsize}},title style={yshift=-7pt, font = \small} }
\theoremstyle{remark}
\newtheorem{assumption}{Assumption}
\newtheorem{theorem}{Theorem}
\newtheorem{lemma}{Lemma}
\newtheorem{remark}{Remark}
\newtheorem{definition}{Definition}
\DeclareMathOperator*{\argmin}{arg\,min}
\definecolor{tab10_blue}{RGB}{33,62,151}
\definecolor{tab10_orange}{RGB}{255,172,37}
\definecolor{tab10_gray}{RGB}{102, 102, 102}
\definecolor{tab10_green}{RGB}{44,160,44}
\definecolor{tab10_red}{RGB}{214,39,40}
\begin{document}

\title{
	Kernel-based Learning for Safe Control of Discrete-Time Unknown Systems under Incomplete Observations
}

\author{Zewen Yang$^{*}$\aref{rki},
        Xiaobing Dai$^{*}$\aref{tum},
        Weijie Yang$^{\dagger}$\aref{tyust},
        Bahar İlgen\aref{rki},
        Aleksandar Anžel\aref{rki},
        Georges Hattab\aref{rki,fub}
        }

\affiliation[rki]{Robert Koch Institute,
        Berlin, Germany
        \email{$\{$yangz; ilgenb; anzela; hattabg$\}$@rki.de}}
\affiliation[tum]{Technical University of Munich, Munich, Germany
        \email{xiaobing.dai@tum.de}}
\affiliation[tyust]{Taiyuan University of Science and Technology, Taiyuan, China
        \email{yangweijie@stu.tyust.edu.cn}}
\affiliation[fub]{Freie Universität Berlin, Berlin, Germany
        }

\maketitle

\begin{abstract}
Safe control for dynamical systems is critical, yet the presence of unknown dynamics poses significant challenges. In this paper, we present a learning-based control approach for tracking control of a class of high-order systems, operating under the constraint of partially observable states. The uncertainties inherent within the systems are modeled by kernel ridge regression, leveraging the proposed strategic data acquisition approach with limited state measurements. To achieve accurate trajectory tracking, a state observer that seamlessly integrates with the control law is devised. The analysis of the guaranteed control performance is conducted using Lyapunov theory due to the deterministic prediction error bound of kernel ridge regression, ensuring the adaptability of the approach in safety-critical scenarios. To demonstrate the effectiveness of our proposed approach, numerical simulations are performed, underscoring its contributions to the advancement of control strategies.
\end{abstract}

\keywords{Learning-based control, kernel ridge regression, discrete-time systems, high-order systems, state observer}

% Please remove or comment out the following line if the footnote is not necessary
\footnotetext{$^{*}$Equal contribution. $^{\dagger}$Corresponding author.}
\footnotetext{This work has been financially supported by the Germany Federal Ministry of Health (BMG) under grant No. 2523DAT400 (project ``AI-assisted analysis and visualization of pandemic situations'' | AI-DAVis-PANDEMICS), and by the Federal Ministry of Education and Research of Germany in the programme of ``Souverän. Digital. Vernetzt.'' under joint project 6G-life with project identification number: 16KISK002, and by Youth Project of Shanxi Basic Research Program (202203021212314).}

\section{Introduction}
Dynamical systems control has garnered significant attention due to its versatility and profound impact across diverse domains, ranging from robotics and electrical engineering to building management~\cite{8529192,gao2021quasi,yin2023learning,wan2023semi}. Despite the prevalent adoption of mechanistic models, they fail to capture precise parameters and provide a comprehensive description of system dynamics~\cite{o2013making}. This discrepancy is further exacerbated by the omnipresence of uncertainties and varying environments.

In recent times, there has been a growing inclination toward integrating machine learning methodologies to address aforementioned challenges in the context of unknown system control \cite{brunton2022data}. Machine learning techniques facilitate the discernment of latent patterns and the derivation of models from data, thereby augmenting the control performance in the face of incomplete system knowledge. A majority of studies employ neural networks (NNs) to either identify the system dynamics \cite{patino2000neural} or fit adaptive controllers~\cite{campestriniDatadrivenModelReference2017,weiDiscretetimeContractionbasedControl2022}. However, the rigorous guaranteed prediction is inadequate due to finite and patchy features. Furthermore, the parametric models including NN models suffer from limited complexity and flexibility, requiring intensive training and extensive data to effectively estimate complex systems.

Conversely, non-parametric techniques, an alternative avenue, leverage the kernel trick to ascertain the inner product, thereby encapsulating an infinite-dimensional feature space~\cite{chiusoSystemIdentificationMachine2019}. Such methods are more adaptable and accurate in modeling scenarios with limited data availability. Moreover, kernel methods offer a distinctive advantage by providing theoretical error bounds~\cite{srinivasInformationTheoreticRegretBounds2012,maddalenaDeterministicErrorBounds2021} making significant contributions to the domain of safety-critical control scenarios~\cite{huang2024learning}. 
Instances of such applications include the utilization of kernel ridge regression (KRR) in model predictive control~\cite{boulkaibet2017new}, support vector regression in conjunction with PID control~\cite{nguyen2009sparse}, and cooperative control incorporating Gaussian process regression~\cite{yangDistributedLearningConsensus2021,9882336,yang2024cooperative}, etc. 
However, the applicability of the literature above is contingent upon the feasibility of obtaining a complete state measurement, which presents practical challenges in real-world applications. 

In response to the inherent challenge of incomplete system state measurements, observers are employed to estimate the complete state based on limited measurements. The field has explored the application of neural observer-based adaptive control~\cite{chakrabartySafeLearningbasedObservers2021, chenNeuralObserverLyapunov2023}, where the control performance is strongly related to the accuracy of the chosen features.
However, the error induced by imprecise and truncated features lacks analytical results, hindering its application in safety-critical scenarios~\cite{dai2024cooperative,yang2024trust,dai2024decentralized}.
While certain endeavors have been made to leverage kernel methods with state observation~\cite{buisson-fenetJointStateDynamics2021,li2022synchronous}, there remains a conspicuous gap in the literature considering the interplay between the learning-based control of dynamical systems and observers. 

In this study, we introduce a kernel-based control method for a class of high-order systems with unknown dynamics, especially those with partially observable states. 
Moreover, a tailored data acquisition strategy is proposed to facilitate the limitation of state measurements, enabling the application of KRR to effectively model the uncertainties within the discrete-time system. 
By incorporating the learning-based control law with a state observer, the analytical ultimate bounds for both tracking error and observation error are provided.
The numerical simulation confirms the effectiveness of the proposed approach, emphasizing a notable enhancement in tracking performance. 

\section{Preliminaries}
\subsection{System Description and Objective}
In this paper, we consider a class of high-order discrete-time nonlinear dynamical systems described as follows
\begin{align} \label{eqn_dynamics}
	&x_i(t_{k+1}) = x_i(t_k) + x_{i+1}(t_k) T, ~ \forall i = 1, \cdots, n-1, \\
	\label{eqn_dynamics2}&x_n(t_{k+1}) = f(\bm{x}(t_k)) + u(t_k), \\
	&y(t_k) = x_1(t_k) + v(t_k),
\end{align}
where $\bm{x}(t_k) = [x_1(t_k), \cdots, x_n(t_k)]^T \in \mathbb{X}\subset \mathbb{R}^n$ with $x_i(t_k) \in \mathbb{R}, \forall i = 1, \cdots, n$ denotes the system state within compact domain $\mathbb{X}$, $u(t_k) \in \mathbb{R}$ represents the system input and $y(t_k) \in \mathbb{R}$ is the measured output at any time instance $t_k \in \mathbb{R}_{\geq0}$ with $k\in \mathbb{N}$.
The time interval $T = t_{k+1}-t_k \in \mathbb{R}_+$ is fixed, and the measurement noise $v(t_k)$ is independently and identically bounded, i.e., $|v(t_k)| \le \bar{v}, \forall k \in \mathbb{N}$, with $\bar{v} \in \mathbb{R}_{\geq0}$.
Although the structure of the system is known, the nonlinear function $f(\cdot): \mathbb{X} \to \mathbb{R}$ is unknown but estimated by the kernel-based learning function resulting in $\mu(\cdot): \mathbb{X} \to \mathbb{R}$ illustrated in \cref{subsec_krr}.
Moreover, it satisfies the following assumption.
\begin{assumption} \label{assumption_RKHS_bounded}
	Given a symmetric positive kernel $\kappa(\cdot, \cdot): \mathbb{X} \!\times\! \mathbb{X} \!\to\! \mathbb{R}_+$, the function $f(\cdot)$ belongs to the unique corresponding reproducing kernel Hilbert space (RKHS) $\mathcal{H}_{\kappa}$ with defined inner-product $\langle \cdot, \cdot \rangle_{\kappa}$.
	Additionally, the RKHS norm of $f(\cdot)$ is bounded by $B \!\in\! \mathbb{R}_+$, i.e., $\| f \|_{\kappa} \!=\!\! \sqrt{\langle f, f \rangle_{\kappa}} \!\le\! B$.
\end{assumption}
\cref{assumption_RKHS_bounded} indicates the unknown function is in the form as $f(\cdot) \!=\! \sum_{j \!=\! 1}^{\infty} \alpha_j \kappa(\bm{x}_j, \cdot)$, where $\alpha_j \!\in\! \mathbb{R}, \forall j \!=\! 1, \!\cdots\!, \infty$ are the coefficients.
Despite a similar form as linear regression, the assumption implies the non-parametricity of the unknown function $f(\cdot)$ considering that $\kappa(\bm{x}_i, \cdot)$ only reflects the correlation between the two points, which requires less prior information and cover larger function classes than parametric models, e.g., neural networks.
Indeed, some kernel functions, e.g., square exponential kernel, have the property of universal approximation, which means they can approximate any continuous function with arbitrary accuracy.
Moreover, the RKHS norm $\| \!\cdot\! \|_{\kappa}$ indicates the smoothness of the function set, and therefore the existence of the upper bound can be regarded as the demand on $f(\cdot)$ to be Lipschitz.
Note that the Lipschitz continuity in the compact $\mathbb{X}$ is easy to achieve by only requiring the function $f(\cdot)$ to be continuous.
Therefore, \cref{assumption_RKHS_bounded} imposes no practical restrictions.
\looseness=-1

The control task is to track a desired trajectory with a form
\begin{align} \label{eqn_reference}
	&s_i(t_{k+1}) = s_i(t_k) + s_{i+1}(t_k) T, ~ \forall i = 1, \cdots, n-1,
\end{align}
and $s_n(t_{k+1}) = r(t_k)$, where the function $r(\cdot): \mathbb{R}_{0,+} \to \mathbb{R}$ is predefined and $s_i(t_k) \in \mathbb{R}, \forall i = 1, \cdots, n$. Specifically, the object is to let $\boldsymbol{x}(t_k) \to \bm{s}(t_k)$ with a bounded neighborhood holds as $t_k\to \infty$ for $\forall \boldsymbol{x}(t_0) \in \mathbb{X}$, where $\bm{s} = [s_1, \cdots, s_n]^T \in \mathbb{X}$ denotes the state of the reference. 

Moreover, to infer the unknown function $f(\cdot)$ in \eqref{eqn_dynamics2}, the function $\mu(\cdot)$ is employed for prediction using collected data set denoted as $\mathbb{D}$ satisfying the following assumption.
\begin{assumption} \label{assumption_dataset}
	The data set $\mathbb{D}$ is composed of $N \in \mathbb{N}$ data pairs $\{ \bm{x}^{(\iota)\!}, z^{(\iota)\!} \}$ for $\iota \!\in\! \{1, \!\cdots\!, N\}$, where $z^{(\iota)\!} \!=\! f(\bm{x}^{(\iota)\!}) \!+\! w^{(\iota)\!}$ for $\forall \iota \!=\! 1, \!\cdots\!, N$.
	The measurement noise $w^{(\iota)} \!\in\! \mathbb{R}$ is bounded by $\bar{w} \!\in\! \mathbb{R}_{0,\!+}$, i.e., $| w^{(\iota)} | \!\le\! \bar{w}, \forall \iota \!=\! 1, \!\cdots\!, N$.
\end{assumption}
While \cref{assumption_dataset} is a common assumption in machine learning requiring full input observation, obtaining $\bm{x}^{(\iota)}$ is practically hard for our system \eqref{eqn_dynamics} due to limited measurements as $y$.
Moreover, \cref{assumption_dataset} admits the bounded measurement noise on $z^{(\iota)}$ as $\bar{w}$, which is directly inherited from bounded noise $v$ in \eqref{eqn_dynamics}.
To tackle the challenge arising from the constrained measurement of $y$ and unspecified $\bar{w}$, we introduce a data collection strategy and analyze the noise propagation from $\bar{v}$ to $\bar{w}$ in \cref{subsection_data_collection}. 

\subsection{Kernel Ridge Regression}\label{subsec_krr}
Owing to the advantageous properties of balance between achieving accurate data fitting and effectively mitigating the impact of uninformative fluctuations, KRR emerges as a widely adopted non-parametric machine learning technique. Given a kernel function $\kappa(\cdot,\cdot)$ and a data set $\mathbb{D} = \{ \bm{x}^{(\iota)}, z^{(\iota)} \}_{\iota = 1, \cdots, N}$, KRR aims to find the optimal solution from the corresponding RKHS $\mathcal{H}_{\kappa}$ with the following optimization problem as
\begin{align} \label{eqn_KRR_optimization}
	\mu \!=\! \argmin_{h \in \mathcal{H}_{\kappa}} N^{-1} \sum\nolimits_{\iota \!=\! 1}^N \Big( z^{(\iota)} \!-\! h(\bm{x}^{(\iota)}) \Big)^2 \!+\! \bar{w}^2 \| h \|^2_{\kappa}
\end{align}
with the penalty coefficient $\bar{w} \in \mathbb{R}_+$ in \cref{assumption_dataset}.
The closed-form solution of \eqref{eqn_KRR_optimization} is derived as follows.
\begin{lemma}[Representer theorem] \label{lemma_representer_theorem}
	Given a kernel $\kappa(\cdot,\cdot)$ with corresponding RKHS $\mathcal{H}_{\kappa}$ and a data set $\mathbb{D} = \{ \bm{x}^{(\iota)}, z^{(\iota)} \}_{\iota = 1, \cdots, N}$, the solution of \eqref{eqn_KRR_optimization} is presented as 
	\begin{align} \label{eqn_representer_theorem}
		\mu(\cdot) = \bm{k}^T(\cdot) \bm{\alpha}^*   ~~ \text{with} ~~ \bm{\alpha}^* = (\bm{K} + N \bar{w}^2 \bm{I}_N)^{-1} \bm{z},
	\end{align}
	where $\bm{k}(\cdot) \!\!=\!\! [\kappa(\!\bm{x}^{(1)\!}, \cdot\!), \!\cdots\!, \kappa(\!\bm{x}^{(N)\!}, \cdot\!)]^T$, $\bm{z} \!=\! [z^{(1)}, \!\cdots\!, z^{(N)}]^T$ and $\bm{K} = [\kappa(\bm{x}^{(p)}, \bm{x}^{(q)})]_{p,q = 1, \cdots,N}$
\end{lemma}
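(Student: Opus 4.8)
The plan is to follow the classical representer-theorem argument: first show that any candidate minimizer of \eqref{eqn_KRR_optimization} can be orthogonally projected onto the finite-dimensional subspace spanned by the kernel sections at the data points without increasing the objective, and then solve the resulting finite-dimensional ridge-regularized least-squares problem in closed form.

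First, I would introduce the subspace $\mathcal{V} = \mathrm{span}\{\kappa(\bm{x}^{(1)}, \cdot), \dots, \kappa(\bm{x}^{(N)}, \cdot)\} \subset \mathcal{H}_{\kappa}$ and decompose an arbitrary $h \in \mathcal{H}_{\kappa}$ as $h = h_{\parallel} + h_{\perp}$ with $h_{\parallel} \in \mathcal{V}$ and $h_{\perp}$ orthogonal to $\mathcal{V}$; the projection is well defined since $\mathcal{V}$ is finite-dimensional, hence closed. By the reproducing property, $h(\bm{x}^{(\iota)}) = \langle h, \kappa(\bm{x}^{(\iota)}, \cdot) \rangle_{\kappa} = \langle h_{\parallel}, \kappa(\bm{x}^{(\iota)}, \cdot) \rangle_{\kappa} = h_{\parallel}(\bm{x}^{(\iota)})$ for every $\iota$, so the data-fitting sum in \eqref{eqn_KRR_optimization} depends only on $h_{\parallel}$. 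Meanwhile $\| h \|_{\kappa}^2 = \| h_{\parallel} \|_{\kappa}^2 + \| h_{\perp} \|_{\kappa}^2 \ge \| h_{\parallel} \|_{\kappa}^2$, with equality if and only if $h_{\perp} = 0$. Since the penalty coefficient $\bar{w}$ is strictly positive, the objective is strictly decreased unless $h_{\perp} = 0$; therefore any minimizer lies in $\mathcal{V}$ and admits the representation $\mu(\cdot) = \sum_{\iota = 1}^N \alpha_{\iota} \kappa(\bm{x}^{(\iota)}, \cdot) = \bm{k}^T(\cdot) \bm{\alpha}$.

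Next, I would substitute this parametrization into \eqref{eqn_KRR_optimization}. Using the reproducing property once more gives $h(\bm{x}^{(p)}) = (\bm{K} \bm{\alpha})_p$ and $\| h \|_{\kappa}^2 = \bm{\alpha}^T \bm{K} \bm{\alpha}$, so the objective reduces to the finite-dimensional convex quadratic $J(\bm{\alpha}) = N^{-1} \| \bm{z} - \bm{K} \bm{\alpha} \|^2 + \bar{w}^2 \bm{\alpha}^T \bm{K} \bm{\alpha}$. Setting $\nabla_{\bm{\alpha}} J = 0$ yields the stationarity equation $\bm{K}(\bm{K} + N \bar{w}^2 \bm{I}_N) \bm{\alpha} = \bm{K} \bm{z}$. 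Because $\kappa$ is a symmetric positive kernel, $\bm{K} \succeq 0$, and with $\bar{w} > 0$ the matrix $\bm{K} + N \bar{w}^2 \bm{I}_N \succ 0$ is invertible; hence $\bm{\alpha}^* = (\bm{K} + N \bar{w}^2 \bm{I}_N)^{-1} \bm{z}$ satisfies the stationarity equation, and by convexity it is a minimizer over $\mathcal{V}$, which combined with the first step makes $\mu(\cdot) = \bm{k}^T(\cdot) \bm{\alpha}^*$ a minimizer over all of $\mathcal{H}_{\kappa}$.

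The step I expect to need the most care is the orthogonal-decomposition argument — concretely, invoking the reproducing property to isolate the dependence of the loss term on $h_{\parallel}$ and using strict positivity of $\bar{w}$ to rule out $h_{\perp} \neq 0$. The finite-dimensional optimization is routine linear algebra; the only subtlety worth a remark is that when $\bm{K}$ is singular the stationarity equation does not pin down $\bm{\alpha}$ uniquely, but every solution induces the same function $\mu$ on $\mathbb{X}$, and the displayed choice $\bm{\alpha}^* = (\bm{K} + N \bar{w}^2 \bm{I}_N)^{-1} \bm{z}$ is always one of them, so the representation in \eqref{eqn_representer_theorem} is valid regardless.
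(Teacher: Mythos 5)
Your proof is correct. The paper itself offers no proof of this lemma---it is stated as the classical representer theorem and used as a known result---so there is no in-paper argument to compare against; what you have written is precisely the standard two-step derivation (orthogonal projection onto $\mathrm{span}\{\kappa(\bm{x}^{(\iota)},\cdot)\}$ via the reproducing property, then the finite-dimensional ridge system) that the paper implicitly relies on. Your closing remark about singular $\bm{K}$ is also sound: multiplying the stationarity equation $\bm{K}(\bm{K}+N\bar{w}^2\bm{I}_N)\bm{d}=\bm{0}$ on the left by $\bm{d}^T$ gives $\|\bm{K}\bm{d}\|^2+N\bar{w}^2\,\bm{d}^T\bm{K}\bm{d}=0$, forcing $\bm{d}^T\bm{K}\bm{d}=0$ and hence $\bm{k}^T(\cdot)\bm{d}\equiv 0$ in $\mathcal{H}_{\kappa}$, so all stationary coefficient vectors induce the same function and the displayed $\bm{\alpha}^*$ is always among them.
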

The closed form of $\mu$ provided in \cref{lemma_representer_theorem} is truncated, i.e., only using the kernels $\kappa(\bm{x}^{(\iota)}, \cdot)$ related to the given data set $\mathbb{D}$.
Therefore, the evaluation of $\mu$ in \eqref{eqn_representer_theorem} at any point $\bm{x} \in \mathbb{X}$ is practically possible with time complexity $\mathcal{O}(N)$.
Moreover, the prediction accuracy of the obtained $\mu(\bm{x})$ is quantified by a defined power function as follows.
\begin{definition}
	The data-dependent power function is defined as $P^2(\bm{x}) = \kappa(\bm{x},\bm{x}) - \bm{k}^T(\bm{x}) (\bm{K} + N \bar{w}^2 \bm{I}_N)^{-1} \bm{k}(\bm{x})$ with positive definite $P(\cdot)$, i.e., $P(\cdot): \mathbb{X} \to \mathbb{R}_+$.
\end{definition}
The positivity of $P^2(\cdot)$ is shown in \cite{maddalena2021deterministic} by observing its Lagrange form as in \cite{wendland2004scattered}, ensuring the real-valued power function.
On the other hand, $P(\cdot)$ can be regarded as the marginal probability of the distribution at $\bm{x}$ conditioned by data set $\mathbb{D}$ by comparing its expression with posterior variance from Gaussian process regression.
With the power function, we are able to assess the deterministic prediction error bound, which is shown as follows.
\begin{lemma} \label{lemma_KRR_error_bound}
	Consider an unknown function $f(\cdot)$ satisfying \cref{assumption_RKHS_bounded} with kernel $\kappa$ and RKHS norm bound $B$, which is predicted using KRR solving \eqref{eqn_KRR_optimization} with data set under \cref{assumption_dataset}.
	Then, the prediction error is bounded as
	\begin{align} \label{eqn_KRR_error_bound}
		| \mu(\bm{x}) \!-\! f(\bm{x}) | \!\le\! \beta P(\bm{x}), && \forall \bm{x} \in \mathbb{X}
	\end{align}
	with $\beta \!=\! \sqrt{B^2 \!-\! \bm{z}^T (\bm{K} \!+\! N \bar{w}^2 \bm{I}_N)^{-1} \bm{z} \!+\! 1}$.
\end{lemma}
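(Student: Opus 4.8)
The plan is to recast the noisy KRR problem as a \emph{noise-free minimum-norm interpolation} in an augmented RKHS, where the classical power-function error bound of scattered-data approximation applies directly, and then to transport the resulting bound back to $\mathcal{H}_{\kappa}$.

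First I would build the augmented space $\hat{\mathcal{H}} := \mathcal{H}_{\kappa} \times \mathbb{R}^N$ with inner product $\langle (h, \bm{a}), (h', \bm{a}') \rangle_{\hat{\mathcal{H}}} := \langle h, h' \rangle_{\kappa} + (N\bar{w}^2)^{-1}\bm{a}^T\bm{a}'$ and the functionals $L_{\iota}(h, \bm{a}) := h(\bm{x}^{(\iota)}) + a_{\iota}$. Their Riesz representers are $\bm{g}_{\iota} := (\kappa(\bm{x}^{(\iota)}, \cdot),\, N\bar{w}^2 \bm{e}_{\iota})$ with $\bm{e}_{\iota}$ the $\iota$-th canonical vector, and a direct computation shows the Gram matrix of $\{\bm{g}_{\iota}\}$ is exactly $\bm{K} + N\bar{w}^2\bm{I}_N$. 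The key observations are: (i) the element $\hat{f} := (f, \bm{w})$ with $\bm{w} = [w^{(1)}, \dots, w^{(N)}]^T$ interpolates the data, since $L_{\iota}(\hat{f}) = f(\bm{x}^{(\iota)}) + w^{(\iota)} = z^{(\iota)}$ by \cref{assumption_dataset}, and has norm $\|\hat{f}\|_{\hat{\mathcal{H}}}^2 = \|f\|_{\kappa}^2 + (N\bar{w}^2)^{-1}\|\bm{w}\|^2 \le B^2 + 1$ by \cref{assumption_RKHS_bounded} together with $|w^{(\iota)}| \le \bar{w}$; and (ii) the minimum-norm interpolant of the data in $\hat{\mathcal{H}}$ is $\hat{s} = \sum_{\iota}\alpha^*_{\iota}\bm{g}_{\iota}$ with $\bm{\alpha}^* = (\bm{K} + N\bar{w}^2\bm{I}_N)^{-1}\bm{z}$, whose $\mathcal{H}_{\kappa}$-component is exactly the KRR estimate $\mu(\cdot) = \bm{k}^T(\cdot)\bm{\alpha}^*$ of \cref{lemma_representer_theorem}, and $\|\hat{s}\|_{\hat{\mathcal{H}}}^2 = \bm{z}^T(\bm{K} + N\bar{w}^2\bm{I}_N)^{-1}\bm{z}$.

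Since $\hat{s}$ is the orthogonal projection of $\hat{f}$ onto $V := \mathrm{span}\{\bm{g}_1, \dots, \bm{g}_N\}$, the Pythagorean identity gives $\|\hat{f} - \hat{s}\|_{\hat{\mathcal{H}}}^2 = \|\hat{f}\|_{\hat{\mathcal{H}}}^2 - \|\hat{s}\|_{\hat{\mathcal{H}}}^2 \le \beta^2$. Then, writing $f(\bm{x}) = \langle \hat{f}, \bm{g}_{\bm{x}} \rangle_{\hat{\mathcal{H}}}$ and $\mu(\bm{x}) = \langle \hat{s}, \bm{g}_{\bm{x}} \rangle_{\hat{\mathcal{H}}}$ with $\bm{g}_{\bm{x}} := (\kappa(\bm{x}, \cdot), \bm{0})$, and using $\hat{f} - \hat{s} \perp V$ to replace $\bm{g}_{\bm{x}}$ by its component orthogonal to $V$, Cauchy--Schwarz yields $|\mu(\bm{x}) - f(\bm{x})| \le \|\hat{f} - \hat{s}\|_{\hat{\mathcal{H}}}\,\mathrm{dist}(\bm{g}_{\bm{x}}, V)$, while $\mathrm{dist}^2(\bm{g}_{\bm{x}}, V) = \kappa(\bm{x},\bm{x}) - \bm{k}^T(\bm{x})(\bm{K} + N\bar{w}^2\bm{I}_N)^{-1}\bm{k}(\bm{x}) = P^2(\bm{x})$, which is the claim.

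I expect the only real obstacle to be the bookkeeping in the augmented space: verifying that the minimum-norm interpolant there coincides with $\mu$ of \cref{lemma_representer_theorem} and not merely up to a scaling, and that $\mathrm{dist}(\bm{g}_{\bm{x}}, V)$ reproduces the power function verbatim with no spurious $\bar{w}$-dependent factor. A more elementary but strictly weaker alternative is to use strong convexity of the KRR functional $J$ in \eqref{eqn_KRR_optimization}, namely $\bar{w}^2\|f - \mu\|_{\kappa}^2 \le J(f) - J(\mu)$, which gives $\|f - \mu\|_{\kappa} \le \beta$ and hence, via the reproducing property, $|\mu(\bm{x}) - f(\bm{x})| \le \beta\sqrt{\kappa(\bm{x},\bm{x})}$; this does not sharpen $\sqrt{\kappa(\bm{x},\bm{x})}$ to $P(\bm{x})$, so the augmented-space route—matching the deterministic bound of \cite{maddalena2021deterministic}—is the one I would carry through.
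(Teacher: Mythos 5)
Your proposal is correct and complete; every step checks out (the Riesz representers, the Gram matrix $\bm{K}+N\bar{w}^2\bm{I}_N$, the identification of the first component of the minimum-norm interpolant with $\mu$ from \cref{lemma_representer_theorem}, the Pythagorean bound $\|\hat{f}-\hat{s}\|^2_{\hat{\mathcal{H}}}\le\beta^2$, and $\mathrm{dist}^2(\bm{g}_{\bm{x}},V)=P^2(\bm{x})$). The route is genuinely different from the paper's, though the two are dual descriptions of the same structure. The paper works with the data-dependent kernel $\kappa_{\mathbb{D}}(\bm{x},\bm{x}')=\kappa(\bm{x},\bm{x}')-\bm{k}^T(\bm{x})(\bm{K}+N\bar{w}^2\bm{I}_N)^{-1}\bm{k}(\bm{x}')$, invokes the norm identity $\|\mu-f\|^2_{\kappa_{\mathbb{D}}}=\|\mu-f\|^2_{\kappa}+(N\bar{w}^2)^{-1}\|\bm{\mu}_{\mathbb{D}}-\bm{f}_{\mathbb{D}}\|^2$ from the cited literature, and then establishes $\|\mu-f\|^2_{\kappa_{\mathbb{D}}}\le\beta^2$ by an explicit algebraic expansion using $\bm{\mu}_{\mathbb{D}}-\bm{z}=-N\bar{w}^2\bm{\alpha}^*$; the final step is the same Cauchy--Schwarz argument with $\kappa_{\mathbb{D}}(\bm{x},\bm{x})=P^2(\bm{x})$, which in your picture is exactly $\mathrm{dist}^2(\bm{g}_{\bm{x}},V)$. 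Your augmented-space construction replaces that page of algebra with the orthogonal-projection identity $\|\hat{f}-\hat{s}\|^2=\|\hat{f}\|^2-\|\hat{s}\|^2$, which is cleaner, self-contained (no external norm identity needed), and makes transparent why the quantity $B^2-\bm{z}^T(\bm{K}+N\bar{w}^2\bm{I}_N)^{-1}\bm{z}+1$ appears; the paper's computation, by contrast, stays entirely inside $\mathcal{H}_{\kappa}$ and exhibits the intermediate exact identity $\|\mu-f\|^2_{\kappa_{\mathbb{D}}}=\|f\|^2_{\kappa}-\bm{z}^T(\bm{K}+N\bar{w}^2\bm{I}_N)^{-1}\bm{z}+(N\bar{w}^2)^{-1}\|\bm{w}\|^2$ before bounding, which some readers may find more concrete. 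Your fallback via strong convexity is, as you note, strictly weaker and should not be used.
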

\begin{proof}
	Define the data-dependent kernel function as $\kappa_{\mathbb{D}}(\bm{x}, \bm{x}') \!=\! \kappa(\bm{x}, \bm{x}') \!-\! \bm{k}^T(\bm{x}) (\bm{K} \!+\! N \bar{w}^2 \bm{I}_N)^{-1} \bm{k}(\bm{x}')$ such that the RKHS norm w.r.t $\kappa_{\mathbb{D}}$ following \cite{srinivas2012information} is written as
	\begin{align} \label{eqn_error_norm_in_kD}
		\| \mu \!-\! f \|^2_{\kappa_{\mathbb{D}}} \!=\! \| \mu \!-\! f \|^2_{\kappa} \!+\! (N \bar{w}^2)^{-1} \| \bm{\mu}_{\mathbb{D}} \!-\! \bm{f}_{\mathbb{D}} \|^2,
	\end{align}
	implying $\kappa_{\mathbb{D}}$ is also a kernel function with $\mathcal{H}_{\kappa_{\mathbb{D}}} \!=\! \mathcal{H}_{\kappa}$ shown in \cite{srinivas2012information}.
	The concatenated prediction and value of unknown function are expressed as $\bm{\mu}_{\mathbb{D}} \!=\! [\mu(\bm{x}^{(1)}), \!\cdots\!, \mu(\bm{x}^{(N)})]^T$ and $\bm{f}_{\mathbb{D}} \!=\! [f(\bm{x}^{(1)}), \!\cdots\!, f(\bm{x}^{(N)})]^T$, respectively.
	Moreover, apply the expression of $f$ and $\mu$ from \cref{assumption_RKHS_bounded} and \eqref{eqn_representer_theorem} respectively and consider
	\begin{align}
		\| \mu - f \|^2_{\kappa} &= \| f \|^2_{\kappa}  + \| \mu \|^2_{\kappa} - 2 \langle \mu , f \rangle_{\kappa} \nonumber \\
		&= \| f \|^2_{\kappa}  + (\bm{\alpha}^*)^T \bm{K} \bm{\alpha}^* - 2 \bm{f}_{\mathbb{D}}^T \bm{\alpha}^*, \\
		\bm{\mu}_{\mathbb{D}} - \bm{z} &= - N \bar{w}^2 \bm{\alpha}^*,
	\end{align}
	then the RKHS norm in \eqref{eqn_error_norm_in_kD} is further bounded as
	\begin{align}
		\| \mu \!-\! f \|^2_{\kappa_{\mathbb{D}}} \!\! =& \| f \|^2_{\kappa}  + (\bm{\alpha}^*)^T \bm{K} \bm{\alpha}^* - 2 \bm{f}_{\mathbb{D}}^T \bm{\alpha}^* \!  \\
		&+\! (N \bar{w}^2)^{-\!1\!} \!\left( \| \bm{\mu}_{\mathbb{D}} \!\!-\! \bm{z} \|^2 \!\!+\!\! \| \bm{w} \|^2 \!\!+\!\! 2 \bm{w}^T\! (\bm{\mu}_{\mathbb{D}} \!-\! \bm{z}) \right) \nonumber\\
        =& \| f \|^2_{\kappa}  + (\bm{\alpha}^*)^T \bm{K} \bm{\alpha}^* - 2 \bm{z}^T \bm{\alpha}^* + 2 \bm{w}^T \bm{\alpha}^* \nonumber \\
        &+ \!\!(N \bar{w}^2)^{-\!1\!} \!\left(\! N^2 \bar{w}^4 \| \bm{\alpha}^* \!\|^2 \!\!+\!\! \| \bm{w} \|^2 \!\!-\!\! 2 N \bar{w}^2  \bm{w}^T \bm{\alpha}^* \right) \nonumber \\
        =& \| f \|^2_{\kappa}  \!-\!\! (\bm{\alpha}^*)^T \! (\bm{K} \!+\! N \bar{w}^2 \bm{I}_N) \bm{\alpha}^* \!\!+\!\! (N \bar{w}^2)^{-\!1\!} \| \bm{w} \|^2 \nonumber \\
		=& \| f \|^2_{\kappa}  \!-\! \bm{z}^T (\bm{K} \!+\! N \bar{w}^2 \bm{I}_N)^{-1} \bm{z} \!+\! (N \bar{w}^2)^{-\!1\!} \| \bm{w} \|^2. \nonumber
	\end{align}
	Furthermore, considering the boundness of $\| f \|^2_{\kappa}$ in \cref{assumption_RKHS_bounded} and $\| \bm{w} \|^2 \!\le\! N \bar{w}^2$, we can directly derive
	\begin{align}
		\| \mu \!-\! f \|_{\kappa_{\mathbb{D}}}^2 &\le B^2 \!-\! \bm{z}^T (\bm{K} \!+\! N \bar{w}^2 \bm{I}_N)^{-1} \bm{z} \!+\! 1 \!=\! \beta^2
	\end{align}
	Finally, considering the Cauchy-Schwarz inequality and the definition of the data-dependent power function, the absolute prediction error at $\bm{x}$ is bounded as
	\begin{align}
		| \mu(\bm{x}) - f(\bm{x}) | &= | \langle \mu(\cdot) - f(\cdot), \kappa_{\mathbb{D}}(\bm{x}, \cdot) \rangle_{\kappa_{\mathbb{D}}} | \\
		&\le \| \mu - f \|_{\kappa_{\mathbb{D}}} \sqrt{\kappa_{\mathbb{D}}(\bm{x}, \bm{x})}, \nonumber
	\end{align}
	which leads to the result in \eqref{eqn_KRR_error_bound}.
\end{proof}
\cref{lemma_KRR_error_bound} provides a deterministic error bound for KRR, where the coefficient $\beta$ can also be formulated as data independence by considering $\beta^2 \le B + 1$.
Moreover, this lemma not only provides the prediction guarantee for safety-critical control discussed in \cref{subsection_control_performance}, but also signifies a robust avenue for performance enhancement through the augmentation of data collection efforts as proven in \cite{vivarelli1998studies}.

\section{Learning-based Safe Control with KRR}

\subsection{Data Acquisition and Analysis} \label{subsection_data_collection}
Due to the partial measurements of the system, the acquisition of training data solely through direct measurements alone proves to be a non-trivial task. Therefore, we introduce auxiliary state variables $\tilde{x}_i(t_k)$, which exhibit a structure akin to that described in \eqref{eqn_dynamics}, formulated as
\begin{align} \label{eqn_tilde_x}
	\tilde{x}_{i+1}(t_k) = T^{-1} \big( \tilde{x}_i(t_{k+1}) - \tilde{x}_i(t_k) \big) , 
\end{align}
for $i = 1, \cdots, n-1$ and $\tilde{x}_1(t_k) = y(t_k)$.
Thus, the concatenated auxiliary variable $\tilde{\bm{x}}(t_k) = [\tilde{x}_1(t_k), \cdots, \tilde{x}_n(t_k)]^T$ is defined in association with the current system state $\boldsymbol{x}(t_k)$. Subsequently, the bounded measurement is calculated as $z(t_k) = \tilde{x}_n(t_k) - u(t_k)$. Utilizing the variables $\tilde{\bm{x}}$ and $z$, the data acquisition strategy is designed in \cref{algorithm_data_collection}. Furthermore, to ensure the safety of the data acquisition process, a predefined safe set $\mathbb{S}$ is defined as a subset of $\mathbb{X}$, such that $\bm{x}(t_{k+1}) \in \mathbb{X}$ if $\bm{x}(t_k) \in \mathbb{S}$ for any control law $u(t_k)$ in $\mathbb{U}$.
\begin{algorithm} [t]
	\caption{Data acquisition}
	\label{algorithm_data_collection}
	\begin{algorithmic} [1]
		\Statex Initialize with $\mathbb{D} = \emptyset$;
		\State Reset the initial state such that $\bm{x}(0) \in \mathbb{S}$ with predefined safety set $\mathbb{S} \subseteq \mathbb{X}$; 
		\State Choose an arbitrary control law $u$ from $\mathbb{U}$, and run experiments until $\bm{x}(k_*) \notin \mathbb{S}$ with $k_* \in \mathbb{N}$;
		\State Collect $\bm{Y} = \{ y(t_k) \}_{k = 0,\cdots,k_*}$ and $\bm{U} = \{ u(t_k) \}_{k=0,\cdots,k_*}$;
		\If{ $k_* \ge n$ }
		\State Calculate $\tilde{\bm{x}}(t_k)$ for $k = 0, \cdots, k_* - n$ from \eqref{eqn_tilde_x};
		\State Calculate $z(t_k) = \tilde{x}_n(t_k) - u(t_k)$ for $k = 0, \cdots, k_* - n$;
		\State $\mathbb{D} \leftarrow \{ \mathbb{D}, \{ \tilde{\bm{x}}(t_k), z(t_k) \}_{k = 0, \cdots, k_* - n} \}$
		\EndIf
		\Statex Repeat Step 1 to 8.
	\end{algorithmic}
\end{algorithm}
\begin{remark}
	\cref{algorithm_data_collection} allows online data collection during the operation.
	However, the data pair obtained using \cref{algorithm_data_collection} is delayed, i.e., at $k \ge n$ only $\{ \tilde{\bm{x}}(t_{k-n}), z(t_{k-n}) \}$ is available and no data pair for $k < n$. 
    It is unlikely to the instant acquirement of data pair $\{ x, z \}$ in continuous-time setting \cite{dai2023can}, which can be used to improve the prediction performance immediately.
% The detailed discussion for $\mathbb{S}$ and $\mathbb{U}$ is beyond the scope of this paper, and is left in future work.
\end{remark}
As the measurement noise $v$ is constrained by $\bar{v}$, it is intuitive to expect that the noise in the data set generated through \cref{algorithm_data_collection} is also bounded. Moreover, \cref{assumption_dataset} requires precise state measurement and transfers the noise into the variable $z$. This noise transformation is challenging due to the presence of an unknown function $f(\cdot)$ characterized by discontinuous behavior. Consequently, the limited variation of $f(\cdot)$ indicates the necessity of Lipschitz continuity of $f(\cdot)$, which is elucidated in the subsequent lemma.
\begin{lemma} [\cite{hashimoto2022learning}] \label{lemma_Lf}
	Let $f(\cdot)$ satisfy \cref{assumption_RKHS_bounded} for a locally Lipschitz kernel $\kappa$ with Lipschitz constant $L_{\kappa} \in \mathbb{R}_+$ induced by the Euclidean norm, i.e., $\sup_{\bm{x} \in \mathbb{X}}\| \nabla \kappa(\bm{x}) \| \!\le\! L_{\kappa}$.
	Then, the function $f(\cdot)$ is also Lipschitz, i.e., $\| \nabla f(\bm{x}) \| \!\le\! L_f, \forall \bm{x} \!\in\! \mathbb{X}$ with Lipschitz constant written as $L_f \!=\! \sqrt{2 L_{\kappa}} B$.
\end{lemma}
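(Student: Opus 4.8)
The plan is to exploit that $f\in\mathcal{H}_\kappa$ together with the reproducing property and its extension to derivatives, thereby reducing the gradient bound on $f$ to a quantity built only from $\kappa$ and its derivatives. Since $\|f\|_\kappa\le B<\infty$ by \cref{assumption_RKHS_bounded}, $f$ lies in $\mathcal{H}_\kappa$ and $f(\bm{x})=\langle f,\kappa(\bm{x},\cdot)\rangle_\kappa$. Because $\kappa$ is locally Lipschitz with $\sup_{\bm{x}\in\mathbb{X}}\|\nabla\kappa(\bm{x})\|\le L_\kappa$, the canonical feature map $\bm{x}\mapsto\kappa(\bm{x},\cdot)$ is differentiable as a map into $\mathcal{H}_\kappa$; in particular each partial feature derivative $\partial_{x_i}\kappa(\bm{x},\cdot)$ again belongs to $\mathcal{H}_\kappa$, and the reproducing identity differentiates, $\partial_{x_i}f(\bm{x})=\langle f,\partial_{x_i}\kappa(\bm{x},\cdot)\rangle_\kappa$ for every coordinate $i$. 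This is the classical statement that smoothness of the kernel transfers to every element of its RKHS; on a compact $\mathbb{X}$ with the stated uniform bound it is automatic.

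Next I would apply the Cauchy--Schwarz inequality coordinatewise and sum: $\|\nabla f(\bm{x})\|^2=\sum_{i=1}^n\langle f,\partial_{x_i}\kappa(\bm{x},\cdot)\rangle_\kappa^2\le\|f\|_\kappa^2\sum_{i=1}^n\|\partial_{x_i}\kappa(\bm{x},\cdot)\|_\kappa^2\le B^2\sum_{i=1}^n\|\partial_{x_i}\kappa(\bm{x},\cdot)\|_\kappa^2$. Applying the derivative reproducing property a second time gives the key identity $\|\partial_{x_i}\kappa(\bm{x},\cdot)\|_\kappa^2=\partial_{x_i}\partial_{x_i'}\kappa(\bm{x},\bm{x}')\big|_{\bm{x}'=\bm{x}}$, so the sum is the trace of the diagonal mixed-derivative matrix $\nabla_{\bm{x}}\nabla_{\bm{x}'}\kappa(\bm{x},\bm{x}')\big|_{\bm{x}'=\bm{x}}$. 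It then remains to bound this trace by $2L_\kappa$ using the hypothesis on $\|\nabla\kappa\|$. Equivalently, one writes $\|\kappa(\bm{x},\cdot)-\kappa(\bm{y},\cdot)\|_\kappa^2=[\kappa(\bm{x},\bm{x})-\kappa(\bm{x},\bm{y})]+[\kappa(\bm{y},\bm{y})-\kappa(\bm{y},\bm{x})]$ by symmetry of $\kappa$, estimates each bracket through the gradient/Lipschitz hypothesis, and combines with a second-order expansion near the diagonal; taking the supremum over $\bm{x}\in\mathbb{X}$ yields $\|\nabla f(\bm{x})\|\le\sqrt{2L_\kappa}\,B=L_f$. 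An alternative route, probably closer to \cite{hashimoto2022learning}, avoids derivatives entirely: from $|f(\bm{x})-f(\bm{y})|\le\|f\|_\kappa\|\kappa(\bm{x},\cdot)-\kappa(\bm{y},\cdot)\|_\kappa$ one bounds $\|\kappa(\bm{x},\cdot)-\kappa(\bm{y},\cdot)\|_\kappa$ in terms of $L_\kappa\|\bm{x}-\bm{y}\|$ by integrating the differential of the feature map along the segment joining $\bm{x}$ and $\bm{y}$, which again reduces to controlling the feature-map derivative by $L_\kappa$.

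I expect the main obstacle to be precisely the passage from the first-order information on $\kappa$ (its gradient bound $L_\kappa$) to the object $\nabla_{\bm{x}}\nabla_{\bm{x}'}\kappa(\bm{x},\bm{x}')\big|_{\bm{x}'=\bm{x}}$ that actually governs $\nabla f$, including the technical justification that $\partial_{x_i}\kappa(\bm{x},\cdot)\in\mathcal{H}_\kappa$ so that differentiation may be moved inside $\langle\cdot,\cdot\rangle_\kappa$; this is the step that pins down the constant $\sqrt{2L_\kappa}$ and the point at which one leans on the cited reference for the exact convention used to define $L_\kappa$. The remaining manipulations — Cauchy--Schwarz, summing over coordinates, and taking the supremum over the compact set $\mathbb{X}$ — are routine, and on $\mathbb{X}$ the gradient bound immediately upgrades to global Lipschitz continuity of $f$.
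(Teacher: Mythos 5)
The paper does not actually prove \cref{lemma_Lf}; it imports the statement from the cited reference, so there is no in-paper argument to compare yours against. Judged on its own terms, your proposal assembles the right ingredients --- the reproducing property, Cauchy--Schwarz, and the identity $\| \kappa(\bm{x},\cdot) - \kappa(\bm{y},\cdot) \|_{\kappa}^2 = \kappa(\bm{x},\bm{x}) - 2\kappa(\bm{x},\bm{y}) + \kappa(\bm{y},\bm{y})$ --- but both routes you sketch stall at the same point, and you correctly flag it without closing it. That unresolved step is a genuine gap, not a technicality. In the derivative route, $\|\nabla f(\bm{x})\|^2 \le B^2\, \mathrm{tr}\big( \nabla_{\bm{x}} \nabla_{\bm{x}'} \kappa(\bm{x},\bm{x}')\big|_{\bm{x}'=\bm{x}} \big)$ requires a bound on the \emph{mixed second derivatives} of $\kappa$ on the diagonal, and no such bound follows from the purely first-order hypothesis $\sup_{\bm{x}} \|\nabla \kappa(\bm{x})\| \le L_{\kappa}$: a kernel can have a small gradient and arbitrarily large second derivatives, so the trace cannot be bounded by $2L_{\kappa}$ in general. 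The same object $\|\partial_{x_i}\kappa(\bm{x},\cdot)\|_{\kappa}$ reappears when you try to ``integrate the differential of the feature map along the segment,'' so the second route inherits the identical difficulty.

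What the first-order hypothesis \emph{does} give, via your bracket decomposition, is $\|\kappa(\bm{x},\cdot)-\kappa(\bm{y},\cdot)\|_{\kappa}^2 \le 2L_{\kappa}\|\bm{x}-\bm{y}\|$, hence
\begin{align*}
|f(\bm{x}) - f(\bm{y})| \le \|f\|_{\kappa}\, \|\kappa(\bm{x},\cdot)-\kappa(\bm{y},\cdot)\|_{\kappa} \le \sqrt{2L_{\kappa}}\, B\, \|\bm{x}-\bm{y}\|^{1/2},
\end{align*}
i.e.\ H\"older continuity of exponent $1/2$ with coefficient exactly the claimed $L_f = \sqrt{2L_{\kappa}}B$. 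This is where the constant comes from, and it is the strongest conclusion derivable from the stated hypothesis (one can also check that $\sqrt{2L_{\kappa}}B$ has the dimensions of a H\"older-$1/2$ coefficient rather than of a gradient bound). To obtain a genuine bound on $\|\nabla f\|$ as in the lemma, you would need to strengthen the assumption to control of $\partial_{x_i}\partial_{x_i'}\kappa$ on the diagonal (equivalently, Lipschitz continuity of the feature map $\bm{x}\mapsto\kappa(\bm{x},\cdot)$ in the $\mathcal{H}_{\kappa}$ norm), which is an assumption on $\kappa$ of a different order than the one given. So the honest deliverable of your argument is the H\"older-$1/2$ bound; the leap from there to $\|\nabla f(\bm{x})\|\le \sqrt{2L_{\kappa}}B$ is the missing idea, and it cannot be supplied from the stated hypotheses alone.
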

Given the established Lipschitz continuity of $f(\cdot)$, the consideration turns to the noisy measurement $z^{(\iota)}$ associated with $f(\cdot)$. Notably, $z^{(\iota)}$ should be obtained via $f(\tilde{\bm{x}}^{(\iota)})$, wherein the auxiliary state $\tilde{\bm{x}}^{(\iota)}$ is utilized, instead of $f(\bm{x}^{(\iota)})$ involving the true state $\bm{x}^{(\iota)}$. Furthermore, it is demonstrated that the upper bound $\bar{w}$ of the measurement noise $w$ for $z$ is directly inherited from $\bar{v}$. The precise expression of this inheritance is expounded upon in the subsequent lemma.
\begin{lemma}
	Consider a $n$-order discrete-time system \eqref{eqn_dynamics} with Lipschitz $f(\cdot)$ and measurement noise on $y$ bounded by $\bar{v}$.
	Using the data collection strategy in \cref{algorithm_data_collection}, then the measurement noise in \cref{assumption_dataset} is bounded by
	\begin{align} \label{eqn_data_w_bar}
		\bar{w} \!=\! \Big( \Big( \frac{2}{T} \Big)^{n-1} + L_f \sqrt{{1 - (2/T)^{2n}}/ {1 - (2/T)^2}} \Big) \bar{v}
	\end{align}
	with the time interval $T$ and $L_f$ in \cref{lemma_Lf}.
\end{lemma}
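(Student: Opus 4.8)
The plan is to trace how the output noise $v$ propagates through the finite-difference construction \eqref{eqn_tilde_x} of the auxiliary state $\tilde{\bm{x}}$ and then into $z$, and to bound each contribution separately. First I would split every auxiliary coordinate into its noiseless part and a noise part, $\tilde{x}_i(t_k) = x_i(t_k) + \delta_i(t_k)$. Substituting this into \eqref{eqn_tilde_x} and using that for $i\le n-1$ the structural relation $x_i(t_{k+1}) - x_i(t_k) = x_{i+1}(t_k)T$ in \eqref{eqn_dynamics} is exact, a straightforward induction on $i$ shows $\delta_1(t_k)=v(t_k)$ and $\delta_{i+1}(t_k) = T^{-1}\bigl(\delta_i(t_{k+1})-\delta_i(t_k)\bigr)$; in particular $\tilde{x}_i = x_i$ when $v\equiv 0$. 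Unrolling this recursion identifies $\delta_i$, up to the scaling $T^{-(i-1)}$, with the $(i-1)$-st forward difference of $v$, namely $\delta_i(t_k) = T^{-(i-1)}\sum_{j=0}^{i-1}(-1)^{i-1-j}\binom{i-1}{j}v(t_{k+j})$.

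Next I would convert this into magnitudes. Since $|v(t_{k+j})|\le\bar{v}$ and $\sum_{j=0}^{i-1}\binom{i-1}{j}=2^{i-1}$, the triangle inequality gives the coordinatewise bound $|\delta_i(t_k)|\le (2/T)^{i-1}\bar{v}$. Aggregating over the $n$ coordinates in the Euclidean norm used by \cref{lemma_Lf} and summing the geometric series,
\begin{align*}
\|\tilde{\bm{x}}(t_k)-\bm{x}(t_k)\|^2 = \sum_{i=1}^{n}\delta_i(t_k)^2 \le \bar{v}^2\sum_{i=1}^{n}(2/T)^{2(i-1)} = \bar{v}^2\,\frac{1-(2/T)^{2n}}{1-(2/T)^{2}},
\end{align*}
with the convention that the sum equals $n$ when $T=2$.

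Then I would connect $z$ to $f$. Starting from the defining relation $z(t_k)=\tilde{x}_n(\cdot)-u(t_k)$ in \cref{algorithm_data_collection}, substituting $\tilde{x}_n = x_n + \delta_n$ and invoking the $n$-th dynamics equation \eqref{eqn_dynamics2}, the control input cancels and $z$ reduces to $f$ evaluated at the true state plus the residual $\delta_n$. Measured against the stored regressor $\tilde{\bm{x}}(t_k)$, the data noise is therefore $w = \bigl(f(\bm{x})-f(\tilde{\bm{x}})\bigr)+\delta_n$, so that
\begin{align*}
|w(t_k)| \le |f(\bm{x}(t_k))-f(\tilde{\bm{x}}(t_k))| + |\delta_n| \le L_f\,\|\tilde{\bm{x}}(t_k)-\bm{x}(t_k)\| + (2/T)^{n-1}\bar{v},
\end{align*}
where the first inequality applies the mean value theorem with the gradient bound $L_f$ from \cref{lemma_Lf} and the second uses the coordinatewise estimate for $\delta_n$. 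Combining this with the aggregate state-error bound above reproduces exactly \eqref{eqn_data_w_bar}.

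The step I expect to be the main obstacle is the index bookkeeping in the last part: one has to align the sampling instants of the iterated difference defining $\tilde{x}_n$ with those of the one-step dynamics so that $u(t_k)$ cancels cleanly and the leftover $f$ is evaluated precisely at the state that \cref{algorithm_data_collection} records as the regressor — a shift by one sampling step is easy to mishandle. By comparison, the closed-form solution of the difference recursion for $\delta_i$, the binomial bound $\sum_j\binom{i-1}{j}=2^{i-1}$, and the geometric sum are all routine once the decomposition $\tilde{x}_i = x_i + \delta_i$ is established.
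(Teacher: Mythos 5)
Your proposal is correct and follows essentially the same route as the paper's proof: the same decomposition $\tilde{x}_i = x_i + v_i$ with the first-order difference recursion for the noise, the same binomial bound $(2/T)^{i-1}\bar{v}$ per coordinate, and the same split of the data noise into $|\delta_n|$ plus the Lipschitz term $L_f\|\bm{x}(t_k)-\tilde{\bm{x}}(t_k)\|$ aggregated via the geometric sum. The index-alignment subtlety you flag at the end (so that $u(t_k)$ cancels and $f$ is evaluated at the recorded regressor) is glossed over in the paper's own argument in exactly the same way, so your treatment is no weaker than the original.
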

\begin{proof}
    It is evident that $\tilde{x}_i(t_k)$ constitutes a noisy representation of $x_i(t_k)$, expressed as $\tilde{x}_i(t_k) \!=\! x_i(t_k) \!+\! v_i(t_k)$, $\forall i \!=\! 1, \!\cdots\!, n$, where $v_1(t_k) \!=\! v(t_k)$. 
    Moreover, the noisy $f(\bm{x}(t_k))$ is constructed by $z(t_k) \!=\! \tilde{x}_n(t_k) \!-\! u(t_k)$, such that $v_i(t_k)$ inherited from $v(t_k)$ is iteratively calculated as
	\begin{align}
		v_{i+1}(t_k) &= \tilde{x}_{i+1}(t_k) - x_{i+1}(t_k) \nonumber \\
		&= T^{-1} \big( (\tilde{x}_i(t_{k + 1}) \!-\! x_i(t_{k + 1})) \!-\! (\tilde{x}_i(t_k) \!-\! x_i(t_k)) \big) \nonumber \\
		&= T^{-1} \big( v_i(t_{k + 1}) - v_i(t_k) \big). 
	\end{align}
	Considering the fact that
	\begin{align}
		z(t_k) - f(\bm{x}(t_k)) = v_n(t_k) = T^{1-n} \sum_{i=0}^{n-1} \binom{i}{n-1} v(t_{k+i}), \nonumber
	\end{align}
	then the bound between the collected $z$ and the function $f(\cdot)$ associated with the auxiliary state variable $\tilde{\boldsymbol{x}}$ is derived as
	\begin{align} \label{eqn_data_error_transferr}
		| z(t_k) \!-\! f(\tilde{\bm{x}}(t_k)) | &\!\le\! |v_n(t_k)| + | f(\bm{x}(t_k)) - f(\tilde{\bm{x}}(t_k)) |  \nonumber\\
		&\!\le\! (2 / T)^{n\!-\!1\!} \bar{v} \!+\! L_f \| \bm{x}(t_k) \!-\! \tilde{\bm{x}}(t_k) \|.
	\end{align}
	Note that $\| \bm{x}(t_k) - \tilde{\bm{x}}(t_k) \|$ is bounded as
	\begin{align} \label{eqn_data_x_error}
		\| \bm{x}(t_k) - \tilde{\bm{x}}(t_k) \|^2 & = \sum_{i=1}^n v_i^2(t_k)\le \sum_{i=1}^n (2/T)^{i-1} \bar{v}^2 \\
		&= (1 - (2/T)^{2n}) (1 - (2/T)^2)^{-1} \bar{v}^2. \nonumber
	\end{align}
	Substitute \eqref{eqn_data_x_error} into \eqref{eqn_data_error_transferr}, then \eqref{eqn_data_w_bar} is derived.
\end{proof}

\subsection{Tracking Control with Performance Guarantee} \label{subsection_control_performance}
To achieve the control objective, we propose a learning-based control law with a state observer as
\begin{align} \label{eqn_controller}
	u(t_k) = - \mu(\hat{\bm{x}}(t_k)) + r(t_k) + \bm{\phi}^T (\hat{\bm{x}}(t_k) - \bm{s}(t_k))
\end{align}
with the state observer for estimating $\boldsymbol{x}$ as
\begin{align} \label{eqn_observer}
    \hat{\bm{x}}(t_{k + 1}) =& (\bm{A} + \bm{b} \bm{\phi}^T) (\hat{\bm{x}}(t_k) - \bm{s}(t_k)) + \bm{s}(t_{k+1}) \\
	& + \bm{\theta} (\hat{x}_1(t_k) - y(t_k) ), \nonumber
\end{align}
where $\hat{\bm{x}} = [\hat{x}_1(t_k), \cdots, \hat{x}_n(t_k)]^T$ is the estimated state. The control gain and observer gain denotes $\bm{\phi} \in \mathbb{R}_{>0}^n$ and $\bm{\theta} \in \mathbb{R}_{>0}^n$, respectively. The matrix $\bm{A} $ and vector $\bm{b}$ are written as
\begin{align*}
	&\bm{A} = \begin{bmatrix}
		\bm{I}_{n-1} & \bm{0}_{(n-1) \times 1} \\
		\bm{0}_{1 \times (n-1)} & 0
	\end{bmatrix} + \begin{bmatrix}
		\bm{0}_{(n-1) \times 1} & T \bm{I}_{n-1} \\
		0 & \bm{0}_{1 \times (n-1)}
	\end{bmatrix}, \\
	&\bm{b} = [\bm{0}_{1 \times (n-1)}, 1]^T.
\end{align*}

To evaluate the system performance, the tracking error $\bm{e} \!=\! \bm{x} \!-\! \bm{s}$ and the observation error $\hat{\bm{e}} \!=\! \hat{\bm{x}} \!-\! \bm{x}$ are introduced, whose dynamics combining \eqref{eqn_dynamics}, \eqref{eqn_controller} and \eqref{eqn_observer} are written as
\begin{subequations}\label{eqn_error_dynamics}
\begin{align} 
	\bm{e}(t_{k+1}) =& (\bm{A} + \bm{b} \bm{\phi}^T) \bm{e}(t_k) + \bm{b} \bm{\phi}^T \hat{\bm{e}}(t_k) \nonumber\\
 &+ \bm{b} (f(x(t_k)) - \mu(\hat{x}(t_k))),  \\
	\hat{\bm{e}}(t_{k+1}) =& (\bm{A} + \bm{\theta} \bm{c}^T) \hat{\bm{e}}(t_k) + \bm{\theta} v(t_k)\nonumber \\
 &- \bm{b} (f(x(t_k)) - \mu(\hat{x}(t_k))) , 
\end{align}
\end{subequations}
where $\bm{c} = [1, \bm{0}_{(n-1) \times 1}^T]^T$. Moreover, all eigenvalues of $\bm{A} + \bm{b} \bm{\phi}^T$ and $\bm{A} + \bm{\theta} \bm{c}^T$ lie inside the unit circle by the designed control gain and observer gain. The existence of $\bm{\phi}$ and $\bm{\theta}$ is guaranteed by considering that the pair $(\bm{A}, \bm{b})$ is controllable and $(\bm{A}, \bm{c}^T)$ is observable from the system structure in \eqref{eqn_dynamics}.
With given desired eigenvalues, $\bm{\phi}$ and $\bm{\theta}$ can be obtained by using Ackermann's formula.

Combining the dynamics of tracking and observation error in \eqref{eqn_error_dynamics}, the concatenated tracking error dynamics denotes
\begin{align} \label{eqn_concatenated_error_dynamics}
	\tilde{\bm{e}}(t_{k\!+\!1}) \!\!=\! \tilde{\bm{A}} \tilde{\bm{e}}(t_k) \!+\! \tilde{\bm{b}} (f(x(t_k)) \!\!-\! \mu(\hat{x}(t_k))) \!+\! \tilde{\bm{\theta}} v(t_k),
\end{align}
where $\tilde{\bm{e}}(t_k) = [\bm{e}^T(t_k), \hat{\bm{e}}^T(t_k)]^T$ and
\begin{align*}
	\tilde{\bm{A}} = \begin{bmatrix}
		\bm{A} + \bm{b} \bm{\phi}^T & \bm{b} \bm{\phi}^T \\
		\bm{0}_{n \times n} & \bm{A} + \bm{\theta} \bm{c}^T
	\end{bmatrix}, ~~ \tilde{\bm{b}} = \begin{bmatrix}
		\bm{b} \\ - \bm{b}
	\end{bmatrix},~~ \tilde{\bm{\theta}} = \begin{bmatrix}
		\bm{0}_{n \times 1} \\ \bm{\theta}
	\end{bmatrix}.
\end{align*}
Note that $\tilde{\bm{A}}$ is also a Schur matrix, whose proof is straightforward by considering $\{ \lambda_i(\tilde{\bm{A}}) \}_{i = 1, \cdots, 2n} = \{ \{ \lambda_i(\bm{A} + \bm{b} \bm{\phi}^T) \}_{i = 1, \cdots, n}, \{ \lambda_i(\bm{A} + \bm{\theta} \bm{c}^T) \}_{i = 1, \cdots, n} \}$ due to its block triangular structure.
The control performance by using the proposed controller in \eqref{eqn_controller} and \eqref{eqn_observer} is shown as follows.
	\begin{theorem} \label{theorem_tracking_error_bound}
		Consider a discrete-time system \eqref{eqn_dynamics} satisfying \cref{assumption_RKHS_bounded}, and using the proposed learning-based controller with a state observer in \eqref{eqn_controller} and \eqref{eqn_observer} to track a predefined trajectory \eqref{eqn_reference}.
		The prediction $\mu$ of unknown $f$ used in \eqref{eqn_controller} is obtained via KRR for the optimization problem in \eqref{eqn_KRR_optimization} with a data set satisfying \cref{assumption_dataset}.
		Choose the control and observation gains $\bm{\phi}$, $\bm{\theta}$ and symmetric positive matrix $\bm{Q}$, such that $\xi_0 > 0$ with
		\begin{align}
			\xi_0 = \underline{\lambda}(\bm{Q}) \!-\! 2 \sqrt{2} L_f \| \tilde{\bm{A}}^T \bm{P} \| - 2 L_f^2 \| \bm{P} \|,
		\end{align}
		where $L_f$ is the Lipschitz constant of $f(\cdot)$ from \cref{lemma_Lf} and $\bm{P} \in \mathbb{R}^{2n \times 2n}$ denote the unique solution of discrete-time Lyapunov equation $\tilde{\bm{A}}^T \bm{P} \tilde{\bm{A}} - \bm{P} = - \bm{Q}$ for a given symmetric positive matrix $\bm{Q} \in \mathbb{R}^{2n \times 2n}$.
        The operator $\underline{\lambda}(\cdot)$ returns the minimal eigenvalue of the matrix.
		Then, there exists $\bar{k} \in \mathbb{N}$ such that the tracking error and the observation error are both ultimately bounded, i.e.,
		\begin{align}
			&\| \bm{e}(t_k) \| \le \chi \xi \big( \sqrt{2} \beta  \bar{P} \!+\! \| \bm{\theta} \| \bar{v} \big), \\
			&\| \hat{\bm{e}}(t_k) \| \le \chi \xi \big( \sqrt{2} \beta  \bar{P} \!+\! \| \bm{\theta} \| \bar{v} \big),
		\end{align}
		for $\forall k \ge \bar{k}$, where $\chi = (\bar{\lambda}(\bm{P}) / \underline{\lambda}(\bm{P}))^{1/2}$, $\xi = \xi_0^{-1} \xi_1 + (1 + \xi_0^{-1} \xi_2)^{1/2}$ and $\bar{P} = \sup_{\bm{x} \in \mathbb{X}} P(\bm{x})$ with
		\begin{align*}
			&\xi_1 = \| \tilde{\bm{A}}^T \bm{P} \| + 2 \beta L_f \| \bm{P} \| \bar{P} + \sqrt{2} L_f \| \bm{P} \| \| \bm{\theta} \| \bar{v},
		\end{align*}
		in which $\xi_2 = \| \bm{P} \|$ and $\beta$ is defined in \cref{lemma_KRR_error_bound}.
	\end{theorem}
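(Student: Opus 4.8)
The plan is to use $V(\tilde{\bm e}) = \tilde{\bm e}^T \bm P \tilde{\bm e}$ as a Lyapunov function for the concatenated error dynamics \eqref{eqn_concatenated_error_dynamics} and to derive a one-step decrease inequality of the form ``$V(t_{k+1}) - V(t_k) \le -\xi_0 \|\tilde{\bm e}(t_k)\|^2 + (\text{affine terms in } \|\tilde{\bm e}(t_k)\|)$'', from which ultimate boundedness of $\tilde{\bm e}$, and hence of $\bm e$ and $\hat{\bm e}$, follows.

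First I would control the learning mismatch $f(\bm x(t_k)) - \mu(\hat{\bm x}(t_k))$ that drives \eqref{eqn_concatenated_error_dynamics}. Splitting it as $[f(\bm x(t_k)) - f(\hat{\bm x}(t_k))] + [f(\hat{\bm x}(t_k)) - \mu(\hat{\bm x}(t_k))]$, the first bracket is bounded by $L_f \|\hat{\bm e}(t_k)\|$ via the Lipschitz estimate of \cref{lemma_Lf}, and the second by $\beta P(\hat{\bm x}(t_k)) \le \beta \bar{P}$ via the deterministic error bound of \cref{lemma_KRR_error_bound}. Using $\|\hat{\bm e}\| \le \|\tilde{\bm e}\|$ (it is a sub-block of $\tilde{\bm e}$) together with $\|\tilde{\bm b}\| = \sqrt{2}$ and $\|\tilde{\bm\theta}\| = \|\bm\theta\|$, the combined perturbation $\tilde{\bm b}(f(\bm x(t_k)) - \mu(\hat{\bm x}(t_k))) + \tilde{\bm\theta} v(t_k)$ has Euclidean norm at most $\sqrt{2} L_f \|\tilde{\bm e}(t_k)\| + (\sqrt{2}\beta \bar{P} + \|\bm\theta\|\bar{v})$, i.e.\ linear in $\|\tilde{\bm e}(t_k)\|$ plus a constant governed by the noise/prediction level $\sqrt{2}\beta\bar P + \|\bm\theta\|\bar v$.

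Next I would expand $V(t_{k+1})$ along \eqref{eqn_concatenated_error_dynamics}, use the discrete Lyapunov equation $\tilde{\bm A}^T \bm P \tilde{\bm A} - \bm P = -\bm Q$ to turn the pure quadratic term into $-\tilde{\bm e}^T \bm Q \tilde{\bm e} \le -\underline{\lambda}(\bm Q)\|\tilde{\bm e}\|^2$, and bound the cross term $2\tilde{\bm e}^T \tilde{\bm A}^T \bm P(\tilde{\bm b}(f - \mu) + \tilde{\bm\theta} v)$ and the residual quadratic term $\|\bm P\|\,\|\tilde{\bm b}(f-\mu)+\tilde{\bm\theta}v\|^2$ by Cauchy--Schwarz and the perturbation bound above. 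Collecting the coefficients of $\|\tilde{\bm e}\|^2$ produces exactly $-\xi_0 \|\tilde{\bm e}\|^2$ with $\xi_0 = \underline{\lambda}(\bm Q) - 2\sqrt 2 L_f\|\tilde{\bm A}^T\bm P\| - 2 L_f^2\|\bm P\|$, which is why $\xi_0 > 0$ is required, while the remaining pieces assemble into the linear coefficient encoded by $\xi_1$ and the constant encoded by $\xi_2$. Hence $V(t_{k+1}) - V(t_k) < 0$ whenever $\|\tilde{\bm e}(t_k)\|$ exceeds the positive root of the resulting quadratic, which is $\xi\,(\sqrt{2}\beta\bar P + \|\bm\theta\|\bar v)$. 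A standard discrete-time ultimate-boundedness argument then gives $\bar k \in \mathbb{N}$ such that $V(t_k)$ remains below the corresponding sublevel set for all $k \ge \bar k$; translating that sublevel set to a Euclidean ball through $\underline{\lambda}(\bm P)\|\tilde{\bm e}\|^2 \le V(\tilde{\bm e}) \le \bar{\lambda}(\bm P)\|\tilde{\bm e}\|^2$ introduces the factor $\chi = (\bar{\lambda}(\bm P)/\underline{\lambda}(\bm P))^{1/2}$, and finally $\|\bm e(t_k)\| \le \|\tilde{\bm e}(t_k)\|$ and $\|\hat{\bm e}(t_k)\| \le \|\tilde{\bm e}(t_k)\|$ deliver the two stated bounds.

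The main obstacle I expect is bookkeeping rather than conceptual: keeping the constants $\|\tilde{\bm A}^T\bm P\|$, $\|\bm P\|$, $L_f$, $\beta$, $\bar P$, $\bar v$, $\|\bm\theta\|$, $\underline{\lambda}(\bm Q)$ organized so that the quadratic-in-$\|\tilde{\bm e}\|$ estimate collapses cleanly into $\xi_0,\xi_1,\xi_2,\xi$. Two finer points need care. First, \cref{lemma_KRR_error_bound} and \cref{lemma_Lf} are stated for arguments in $\mathbb{X}$, whereas they are invoked at the \emph{estimated} state $\hat{\bm x}(t_k)$; one must either argue $\hat{\bm x}(t_k)$ stays in $\mathbb{X}$ (leveraging the safe set $\mathbb{S}$ and the uniform bound $\bar P = \sup_{\bm x\in\mathbb{X}} P(\bm x)$) or regard $f,\mu$ as defined on a slightly enlarged domain. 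Second, the discrete-time ultimate-boundedness step must account for a possible one-step overshoot out of the target sublevel set, which is precisely where the eigenvalue ratio $\chi$ of $\bm P$ is needed.
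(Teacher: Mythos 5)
Your proposal is correct and follows essentially the same route as the paper's proof: the Lyapunov candidate $V = \tilde{\bm{e}}^T \bm{P} \tilde{\bm{e}}$, the split of $f(\bm{x}) - \mu(\hat{\bm{x}})$ into a Lipschitz term $L_f\|\hat{\bm{e}}\| \le L_f\|\tilde{\bm{e}}\|$ plus the KRR bound $\beta P(\hat{\bm{x}})$, the quadratic-in-$\|\tilde{\bm{e}}\|$ estimate collapsing into $\xi_0,\xi_1,\xi_2$, the positive root $\xi(\sqrt{2}\beta\bar{P} + \|\bm{\theta}\|\bar{v})$, and the sublevel-set argument introducing $\chi$. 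Your two cautionary remarks (validity of the lemmas at $\hat{\bm{x}}$ and the one-step overshoot) are points the paper itself passes over silently, so they strengthen rather than deviate from its argument.
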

	\begin{proof}
		The control performance is analyzed using Lyapunov theory, where the Lyapunov candidate is chosen as $V(t_k) = \tilde{\bm{e}}^T(t_k) \bm{P} \tilde{\bm{e}}(t_k)$
		with the symmetric positive definite $\bm{P}$.
		The existence and uniqueness of $\bm{P}$ is proved by considering that $\tilde{\bm{A}}$ is a Schur matrix.
		Applying the concatenated error dynamics in \eqref{eqn_concatenated_error_dynamics}, the difference of Lyapunov function $\Delta V(t_k) = V(t_{k+1}) - V(t_k)$ between two consecutive time instances is written as
		\begin{align}
			&\Delta V(t_k) \!= \!-\! \tilde{\bm{e}}^T\!(t_k) \bm{Q} \tilde{\bm{e}}(t_k)  \\
			&~~~~+ 2 \tilde{\bm{e}}^T\!(t_k) \tilde{\bm{A}}^T\! \bm{P} \big(\tilde{\bm{b}} (f(x(t_k)) \!-\! \mu(\hat{x}(t_k))) \!+\! \tilde{\bm{\theta}} v(t_k) \big) \nonumber\\
			&~~~~+ \big(\tilde{\bm{b}} (f(x(t_k)) \!-\! \mu(\hat{x}(t_k))) \!+\! \tilde{\bm{\theta}} v(t_k) \big)^T \bm{P} \big(\tilde{\bm{\theta}} v(t_k)  \nonumber \\
			& ~~~~~~~~~ \!+\! \tilde{\bm{b}} (f(x(t_k)) \!-\! \mu(\hat{x}(t_k)))  \big). \nonumber
		\end{align}
		Using the properties of eigenvalue and singular value, the difference $\Delta V(t_k)$ is bounded by
		\begin{align}
			\Delta &V(t_k) \!\le - \underline{\lambda}(\bm{Q}) \| \tilde{\bm{e}}(t_k) \|^2 \nonumber \\
			&\!+\! 2 \| \tilde{\bm{A}}^T \!\bm{P} \| \! \| \tilde{\bm{e}}(t_k) \| \!\big(\! \| \tilde{\bm{b}} \| |f(x(t_k)) \!-\! \mu(\hat{x}(t_k))| \!\!+\!\! \| \tilde{\bm{\theta}} \| |v(t_k)| \big) \nonumber \\
			&\!+\! \| \bm{P} \| \big( \| \tilde{\bm{b}} \| |f(x(t_k)) \!-\! \mu(\hat{x}(t_k))| \!+\! \| \tilde{\bm{\theta}} \| |v(t_k)| \big)^2, 
		\end{align}
		where the prediction error bound using estimated system states $\hat{\bm{x}}$ is bounded as
		\begin{align}
			|f(x) \!-\! \mu(\hat{x})| &\!\le\! |f(\bm{x}) \!-\! f(\hat{\bm{x}})| \!+\! |f(\hat{\bm{x}}) \!-\! \mu(\hat{\bm{x}})|  \\
			&\le L_f \| \hat{\bm{e}} \| + \beta P(\hat{\bm{x}}) \le L_f \| \tilde{\bm{e}} \| + \beta P(\hat{\bm{x}}), \nonumber
		\end{align}
		such that $\Delta V(t_k)$ is further bounded as
		\begin{align}
			\Delta V(t_k) \!\le\!& -\! \big(\underline{\lambda}(\bm{Q}) \!-\! 2 L_f \| \tilde{\bm{A}}^T \! \bm{P} \| \| \tilde{\bm{b}} \| \!-\! L_f^2 \| \bm{P} \| \tilde{\bm{b}} \|^2 \big) \| \tilde{\bm{e}}(t_k) \|^2 \nonumber \\
			& \!+\! 2 \big(\| \tilde{\bm{A}}^T \! \bm{P} \| \!+\!    \beta L_f \| \bm{P} \| \| \tilde{\bm{b}} \|^2  P(\hat{\bm{x}}(t_k)) \nonumber\\
			& ~~~~~~~ \!+\! L_f \| \bm{P} \| \| \tilde{\bm{b}} \| \| \tilde{\bm{\theta}} \| |v(t_k)| \big) \| \tilde{\bm{e}}(t_k) \| \\
			& \!+\! \| \bm{P} \| \big( \beta \| \tilde{\bm{b}} \| P(\hat{\bm{x}}(t_k)) \!+\! \| \tilde{\bm{\theta}} \| |v(t_k)| \big)^2  \nonumber\\
			\le& - \xi_0 \| \tilde{\bm{e}}(t_k) \|^2 \!+\! \xi_2 \big( \beta \| \tilde{\bm{b}} \| P(\hat{\bm{x}}(t_k)) \!+\! \| \tilde{\bm{\theta}} \| |v(t_k)| \big)^2 \nonumber \\
			&\!+\! 2 \xi_1 \| \tilde{\bm{e}}(t_k) \| \big( \beta \| \tilde{\bm{b}} \| P(\hat{\bm{x}}(t_k)) \!+\! \| \tilde{\bm{\theta}} \| |v(t_k)| \big)  \nonumber
		\end{align}
		considering $\| \tilde{\bm{b}} \| = \sqrt{2}$ and $\| \tilde{\bm{\theta}} \| = \| \bm{\theta} \|$.
		Recall the assumptions of positive coefficients $\xi_0$, $\xi_1$ and $\xi_2$ in the theorem, which leads the right-hand side as a concave function w.r.t $\| \tilde{\bm{e}}(t_k) \|$, and therefore the positive range of $\| \tilde{\bm{e}}(t_k) \|$ ensuring the negativity of $\Delta V(t_k)$ is written as
		\begin{align} \label{eqn_stability_condition_e_1}
			\| \tilde{\bm{e}}(t_k) \| &\ge \xi_0^{\!-\!1\!} \big( \xi_1 \!+\! \sqrt{ \xi_1^2 \!+\! \xi_0 \xi_2} \big) \big( \beta \| \tilde{b} \| \bar{P} \!+\! \| \tilde{\bm{\theta}} \| \bar{v} \big) \nonumber \\
			&= \xi \big( \sqrt{2} \beta  \bar{P} \!+\! \| \bm{\theta} \| \bar{v} \big),
		\end{align}
		ignoring the invalid negative part of $\| \tilde{\bm{e}}(t_k) \|$.
		Considering $\tilde{\lambda} \| \tilde{\bm{e}} \|^2 \!\!\le\!\! V \!\!\le\!\! \bar{\lambda} \| \tilde{\bm{e}} \|^2$, the decay of the Lyapunov function is indicated using \eqref{eqn_stability_condition_e_1} when $V(t_k) \!\!\ge\!\! \bar{\lambda}(\bm{P}) \xi^2 \bar{P}^2$, resulting $V(t_k)$ is ultimately bounded by $V(t_k) \!\le\! \bar{\lambda}(\bm{P}) \xi^2 \bar{P}^2$ and therefore $\| \tilde{\bm{e}}(t_k) \|^2 \!\!\le\!\! \bar{\lambda}(\bm{P}) / \underline{\lambda}(\bm{P}) \xi^2 \bar{P}^2$.
		The proof is concluded considering $\| \bm{e}(t_k) \| \!\le\! \| \tilde{\bm{e}}(t_k) \|$ and $\| \hat{\bm{e}}(t_k) \| \!\le\! \| \tilde{\bm{e}}(t_k) \|$.
	\end{proof}

	\section{Simulation}

    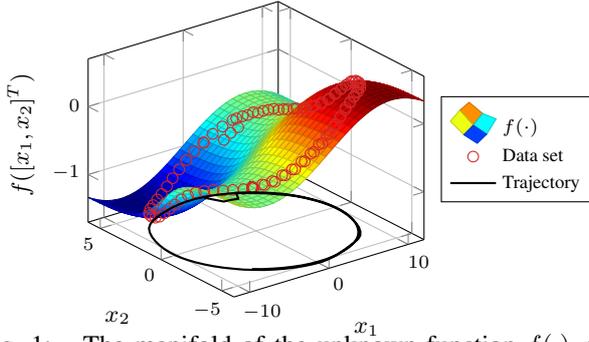
\begin{figure}[t] 
		\centering
		\def\file{figures/DataAndReference.txt}
		\begin{tikzpicture}
			\begin{axis}[xlabel={$x_1$},ylabel={$x_2$},zlabel={$f([x_1,x_2]^T)$},
				view={-37.5}{30},
				xmin=-12, ymin = -6, xmax = 12,ymax=6, zmin=-1.7, zmax=0.7,legend columns=1,
				width=6cm,height=5.5cm,legend style={at={(1.05,0.5)},anchor=west},
				clip mode=individual]
				\addplot3[surf,	opacity=0.3, domain=-12:12,domain y=-6:6,mesh/cols=30,mesh/rows=30, colormap/jet, line width=0.01pt]    table[x = X1_f , y  = X2_f, z = Y_f ]{\file};
				\addplot3[only marks, mark=o, tab10_red]    table[x = X1 , y  = X2, z = Y ]{\file};
				\addplot3[thick, black]    table[x = X1 , y  = X2, z = Y0 ]{\file};
				\legend{$f(\cdot)$, Data set, Trajectory}
			\end{axis}
		\end{tikzpicture}
		\vspace{-0.4cm}
		\caption{
			The manifold of the unknown function $f(\cdot)$, the states of the collected training data set $\mathbb{D}$, and the state trajectory during the data acquisition procedure.
		}
		\vspace{-0.3cm}
		\label{figure_dataset}
	\end{figure}
 
	To demonstrate the effectiveness of the proposed control law \eqref{eqn_controller}, discrete-time dynamics as in \eqref{eqn_dynamics} is considered with $n=2$ and $T = 0.2$.
	The measurement noise of the system output $y$ is bounded by $\bar{v} = 0.01$.
	Moreover, the unknown function is considered as
	$f(\bm{x}) = 0.5 (\sin(0.2 x_1) - 1) + 1 / (1 + \exp (x_2))$ and the kernel function is chosen as $\kappa(\bm{x}, \bm{x}') = \sigma_f^2 \exp(-0.5 l^{-2} \| \bm{x} - \bm{x}' \|^2)$ with $\sigma_f = 0.5$ and $l = 5$ such that the corresponding RKHS norm is bounded by $B = 0.3$.
	The tracking reference trajectory is given with $r(t_k) = 50 ( \sin(0.1k + 0.2) - \sin(0.1k + 0.1))$.
	To achieve such control task, the control gain $\bm{\phi}$ and observation gain $\bm{\theta}$ are determined using Ackermann's formula, such that the eigenvalues of $\bm{A} + \bm{b} \bm{\phi}^T$ and $\bm{A} + \bm{\theta} \bm{c}^T$ are set as $[0.8,0.7]$ and $[0.01,0.02]$, respectively.
	Note that this choice of $\bm{\phi}$ and $\bm{\theta}$ ensures $\xi_0 > 0$ in \cref{theorem_tracking_error_bound}.
	The prediction $\mu$ of the unknown $f$ is predicted using KRR, where the training data set is collected through \cref{algorithm_data_collection} by choosing the controller similar to \eqref{eqn_controller} with the same coefficients.
	In total, $200$ training samples are collected, whose distribution is shown in \cref{figure_dataset}.
	The simulation is initialized at $k = 0$ with $\bm{x}(0) = [0,50\sin(0.1)]^T$ and lasts $200$ steps.

	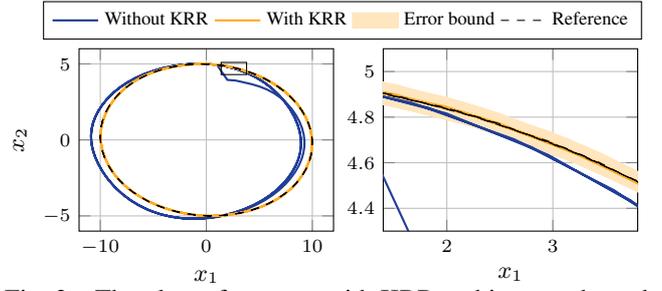
\begin{figure}[t]
		\begin{tikzpicture}
			\def\file{figures/StatePhase.txt}
			\begin{axis}[xlabel={$x_1$},ylabel={$x_2$},
				xmin=-12, ymin = -6, xmax = 12,ymax=6,legend columns=2,
				width=0.28\textwidth,height=4cm,legend pos=north east,
				xshift=-3.5cm,clip mode=individual]
				\addplot[tab10_blue,thick]    table[x = x1_set_no_KRR , y  = x2_set_no_KRR ]{\file};
				\addplot[tab10_orange,thick]    table[x = x1_set_with_KRR , y  = x2_set_with_KRR ]{\file};
				\addplot[black,dashed]    table[x = x1_ref_set , y  = x2_ref_set ]{\file};
				\addplot+[name path=lower,black,no markers, draw=none] table[x = x1_ref_lower_set , y  = x2_ref_lower_set ]{\file};
				\addplot+[name path=upper,black,no markers, draw=none] table[x = x1_ref_upper_set , y  = x2_ref_upper_set ]{\file};
				\addplot[tab10_orange!30] fill between[of=lower and upper];
				\draw[draw=black] (1.4,4.3) rectangle (3.8,5.1);
			\end{axis}
			\begin{axis}[xlabel={$x_1$},ylabel={},
				xmin=1.4, ymin = 4.3, xmax = 3.8,ymax=5.1,legend columns=4,
				width=0.28\textwidth,height=4cm,legend style={at={(-0.16,1.05)},anchor=south},
				xshift=0.5cm,clip mode=individual]
				\addplot[tab10_blue,thick]    table[x = x1_set_no_KRR , y  = x2_set_no_KRR ]{\file};
				\addplot[tab10_orange,thick]    table[x = x1_set_with_KRR , y  = x2_set_with_KRR ]{\file};
				\addplot+[name path=lower,black,no markers, draw=none] table[x = x1_ref_lower_set , y  = x2_ref_lower_set ]{\file};
				\addplot+[name path=upper,black,no markers, draw=none] table[x = x1_ref_upper_set , y  = x2_ref_upper_set ]{\file};
                \addplot[tab10_orange!30] fill between[of=lower and upper];
                \addplot[black,dashed]  table[x = x1_ref_set , y  = x2_ref_set ]{\file};
				\legend{Without KRR, With KRR,,,Error bound, Reference}
			\end{axis}
		\end{tikzpicture}
		\vspace{-0.3cm}
		\caption{
			The plots of states $\bm{x}$ with KRR and its error bound (orange), without KRR (blue), and the desired trajectory (black dashed). The partial magnification plots (right).   
		}
		\vspace{-0.3cm}
		\label{figure_Bound}
	\end{figure}
 
	\begin{figure}[t]
		\centering
		% \vspace{0.3cm}
		\begin{tikzpicture}
			\def\file{figures/Error.txt}
			\begin{semilogyaxis}[xlabel={},ylabel={$\| \bm{e} \|$},
				xmin=0.1, ymin = 1e-5, xmax = 199.9,ymax=9e2,legend columns=3,
				width=0.48\textwidth,height=3.5cm,legend pos=north east,
				xticklabels={,,}]
				\addplot[tab10_blue, thick]    table[x = k , y  = e_no ]{\file};
				\addplot[tab10_orange,thick]    table[x = k , y  = e ]{\file};
				\addplot[tab10_gray,thick]    table[x = k , y  = e_exact ]{\file};
				\legend{Without KRR, With KRR, Exact}
			\end{semilogyaxis}
			\begin{semilogyaxis}[xlabel={$k$},ylabel={$\| \hat{\bm{e}} \|$},
				xmin=0.1, ymin = 1e-5, xmax = 199.9,ymax=3e0,legend columns=2,
				width=0.48\textwidth,height=3.5cm,legend pos=north east,
				yshift=-2.1cm]
				\addplot[tab10_blue, thick]    table[x = k , y  = e_hat_no ]{\file};
				\addplot[tab10_orange,thick]    table[x = k , y  = e_hat ]{\file};
				\addplot[tab10_gray,thick]    table[x = k , y  = e_hat_exact ]{\file};
			\end{semilogyaxis}
		\end{tikzpicture}
		\vspace{-0.3cm}
		\caption{
			The curves of $\| \bm{e}(t) \|$ and $\| \hat{\bm{e}}(t) \|$ over time $t$.
		}
		\vspace{-0.3cm}
		\label{figure_Error}
        \vspace{-0.3cm}
	\end{figure}
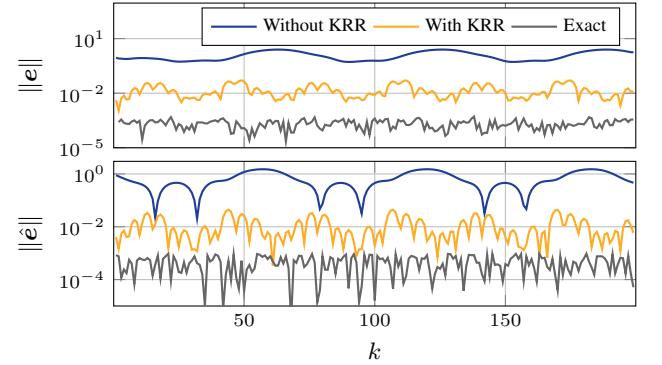

    We conduct a comparative analysis of trajectories both with and without KRR, representing the tracking performance in the state position in \cref{figure_Bound}, where $\mu(\cdot) = 0$ for the case without learning. Moreover, \cref{figure_Error} shows the norm of tracking error $\bm{e}$ and observation error $\hat{\bm{e}}$ for with/without methods and a controller with exact dynamics. Notably, the controller incorporating KRR demonstrates a substantial advancement, manifesting a remarkable $10$ times reduction in both tracking and estimation errors when compared to the controller operating without learning techniques. 

\section{Conclusion}
In summary, this paper presents a novel kernel-based control method for the safe tracking of high-order systems with unknown dynamics and partially observable states. 
Leveraging KRR and a tailored data collection strategy, the proposed approach effectively models system uncertainties with limited state measurements. 
Our proposed learning-based control law integrated with a state observer ensures the deterministic bounds for both tracking and observation errors and underscores its adaptability in safety-critical control tasks. 
The simulations show that there is a significant improvement in the tracking performance, which validates the effectiveness of the proposed method. 

\bibliographystyle{IEEEtran}
\bibliography{refs}

% Generated by IEEEtran.bst, version: 1.14 (2015/08/26)
\begin{thebibliography}{10}
\providecommand{\url}[1]{#1}
\csname url@samestyle\endcsname
\providecommand{\newblock}{\relax}
\providecommand{\bibinfo}[2]{#2}
\providecommand{\BIBentrySTDinterwordspacing}{\spaceskip=0pt\relax}
\providecommand{\BIBentryALTinterwordstretchfactor}{4}
\providecommand{\BIBentryALTinterwordspacing}{\spaceskip=\fontdimen2\font plus
\BIBentryALTinterwordstretchfactor\fontdimen3\font minus
  \fontdimen4\font\relax}
\providecommand{\BIBforeignlanguage}[2]{{%
\expandafter\ifx\csname l@#1\endcsname\relax
\typeout{** WARNING: IEEEtran.bst: No hyphenation pattern has been}%
\typeout{** loaded for the language `#1'. Using the pattern for}%
\typeout{** the default language instead.}%
\else
\language=\csname l@#1\endcsname
\fi
#2}}
\providecommand{\BIBdecl}{\relax}
\BIBdecl

\bibitem{8529192}
Z.~Yan, J.~Li, Y.~Wu, and Z.~Yang, ``{A Novel Path Planning for AUV Based on
  Objects’ Motion Parameters Predication},'' \emph{IEEE Access}, vol.~6, pp.
  69\,304--69\,320, 2018.

\bibitem{gao2021quasi}
L.~Gao, X.~Dai, M.~Kleeberger, and J.~Fottner, ``{Quasi-static Optimal Control
  Strategy of Lattice Boom Crane Based on Large-Scale Flexible Non-linear
  Dynamics},'' in \emph{International Conference on Simulation and Modeling
  Methodologies, Technologies and Applications}.\hskip 1em plus 0.5em minus
  0.4em\relax Springer, 2021, pp. 153--177.

\bibitem{yin2023learning}
Z.~Yin, X.~Dai, Z.~Yang, Y.~Shen, G.~Hattab, and H.~Zhao, ``{Learning-based
  Control for PMSM Using Distributed Gaussian Processes with Optimal
  Aggregation Strategy},'' in \emph{IECON 2023-49th Annual Conference of the
  IEEE Industrial Electronics Society}.\hskip 1em plus 0.5em minus 0.4em\relax
  IEEE, 2023, pp. 1--7.

\bibitem{wan2023semi}
L.~Wan, X.~Dai, T.~Welfonder, E.~Petrova, and P.~Pauwels, ``{Semi-Automated
  Thermal Envelope Model Setup for Adaptive Model Predictive Control with
  Event-Triggered System Identification},'' in \emph{Building Simulation
  Conference}, 2023.

\bibitem{o2013making}
J.~X. O'reilly, ``Making predictions in a changing world—inference,
  uncertainty, and learning,'' \emph{Frontiers in neuroscience}, vol.~7, p.
  105, 2013.

\bibitem{brunton2022data}
S.~L. Brunton and J.~N. Kutz, \emph{{Data-driven Science and Engineering:
  Machine Learning, Dynamical Systems, and Control}}.\hskip 1em plus 0.5em
  minus 0.4em\relax {Cambridge University Press}, 2022.

\bibitem{patino2000neural}
H.~D. Patino and D.~Liu, ``Neural network-based model reference adaptive
  control system,'' \emph{IEEE Transactions on Systems, Man, and Cybernetics,
  Part B (Cybernetics)}, vol.~30, no.~1, pp. 198--204, 2000.

\bibitem{campestriniDatadrivenModelReference2017}
L.~Campestrini, D.~Eckhard, A.~Sanfelice~Bazanella, and M.~Gevers,
  ``{Data-Driven Model Reference Control Design by Prediction Error
  Identification},'' \emph{Journal of the Franklin Institute}, vol. 354, no.~6,
  pp. 2628--2647, 2017.

\bibitem{weiDiscretetimeContractionbasedControl2022}
L.~Wei, R.~McCloy, and J.~Bao, ``{Discrete-Time Contraction-Based Control of
  Nonlinear Systems with Parametric Uncertainties Using Neural Networks},''
  \emph{Computers \& Chemical Engineering}, vol. 166, p. 107962, 2022.

\bibitem{chiusoSystemIdentificationMachine2019}
A.~Chiuso and G.~Pillonetto, ``System {{Identification}}: {{A Machine Learning
  Perspective}},'' \emph{Annual Review of Control, Robotics, and Autonomous
  Systems}, vol.~2, no.~1, pp. 281--304, 2019.

\bibitem{srinivasInformationTheoreticRegretBounds2012}
N.~Srinivas, A.~Krause, S.~M. Kakade, and M.~W. Seeger,
  ``Information-{{Theoretic Regret Bounds}} for {{Gaussian Process
  Optimization}} in the {{Bandit Setting}},'' \emph{IEEE Transactions on
  Information Theory}, vol.~58, no.~5, pp. 3250--3265, 2012.

\bibitem{maddalenaDeterministicErrorBounds2021}
E.~T. Maddalena, P.~Scharnhorst, and C.~N. Jones, ``{Deterministic Error Bounds
  for Kernel-Based Learning Techniques under Bounded Noise},''
  \emph{Automatica}, vol. 134, p. 109896, 2021.

\bibitem{huang2024learning}
T.-Y. Huang, X.~Dai, S.~Zhang, A.~Capone, V.~Todorovski, S.~Sosnowski, and
  S.~Hirche, ``{Learning-based Prescribed-Time Safety for Control of Unknown
  Systems with Control Barrier Functions},'' \emph{arXiv preprint
  arXiv:2403.08054}, 2024.

\bibitem{boulkaibet2017new}
I.~Boulkaibet, K.~Belarbi, S.~Bououden, T.~Marwala, and M.~Chadli, ``{A New TS
  Fuzzy Model Predictive Control for Nonlinear Processes},'' \emph{{Expert
  Systems with Applications}}, vol.~88, pp. 132--151, 2017.

\bibitem{nguyen2009sparse}
D.~Nguyen-Tuong, B.~Sch{\"o}lkopf, and J.~Peters, ``{Sparse Online Model
  Learning for Robot Control with Support Vector Regression},'' in \emph{{2009
  IEEE/RSJ International Conference on Intelligent Robots and Systems}}.\hskip
  1em plus 0.5em minus 0.4em\relax IEEE, 2009, pp. 3121--3126.

\bibitem{yangDistributedLearningConsensus2021}
Z.~Yang, S.~Sosnowski, Q.~Liu, J.~Jiao, A.~Lederer, and S.~Hirche,
  ``{Distributed Learning Consensus Control for Unknown Nonlinear Multi-Agent
  Systems based on Gaussian Processes},'' in \emph{2021 60th IEEE Conference on
  Decision and Control (CDC)}, 2021, pp. 4406--4411.

\bibitem{9882336}
A.~Lederer, Z.~Yang, J.~Jiao, and S.~Hirche, ``{Cooperative Control of
  Uncertain Multiagent Systems via Distributed Gaussian Processes},''
  \emph{IEEE Transactions on Automatic Control}, vol.~68, no.~5, pp.
  3091--3098, 2023.

\bibitem{yang2024cooperative}
Z.~Yang, S.~Dong, A.~Lederer, X.~Dai, S.~Chen, S.~Sosnowski, G.~Hattab, and
  S.~Hirche, ``{Cooperative Learning with Gaussian Processes for Euler-Lagrange
  Systems Tracking Control under Switching Topologies},'' \emph{arXiv preprint
  arXiv:2402.03048}, 2024.

\bibitem{chakrabartySafeLearningbasedObservers2021}
A.~Chakrabarty and M.~Benosman, ``Safe learning-based observers for unknown
  nonlinear systems using {{Bayesian}} optimization,'' \emph{Automatica}, vol.
  133, p. 109860, 2021.

\bibitem{chenNeuralObserverLyapunov2023}
S.~Chen, S.~Cai, T.~Chen, C.~Xu, and J.~Chu, ``Neural {{Observer With Lyapunov
  Stability Guarantee}} for {{Uncertain Nonlinear Systems}},'' \emph{IEEE
  Transactions on Neural Networks and Learning Systems}, pp. 1--15, 2023.

\bibitem{dai2024cooperative}
X.~Dai, Z.~Yang, and S.~Hirche, ``{Cooperative Online Learning for Multi-Agent
  System Control via Gaussian Processes with Event-Triggered Mechanism:
  Extended Version},'' \emph{arXiv preprint arXiv:2304.05138}, 2024.

\bibitem{yang2024trust}
Z.~Yang, X.~Dai, A.~Dubey, S.~Hirche, and G.~Hattab, ``{Whom to Trust? Elective
  Learning for Distributed Gaussian Process Regression},'' \emph{arXiv preprint
  arXiv: 2402.03014}, 2024.

\bibitem{dai2024decentralized}
X.~Dai, Z.~Yang, M.~Xu, F.~Liu, G.~Hattab, and S.~Hirche, ``{Decentralized
  Event-Triggered Online Learning for Safe Consensus of Multi-Agent Systems
  with Gaussian Process Regression},'' \emph{arXiv preprint arXiv:2402.03174},
  2024.

\bibitem{buisson-fenetJointStateDynamics2021}
M.~Buisson-Fenet, V.~Morgenthaler, S.~Trimpe, and F.~Di~Meglio, ``Joint state
  and dynamics estimation with high-gain observers and {{Gaussian}} process
  models,'' in \emph{2021 {{American Control Conference}} ({{ACC}})}.\hskip 1em
  plus 0.5em minus 0.4em\relax {IEEE}, 2021, pp. 4027--4032.

\bibitem{li2022synchronous}
J.~Li and F.~Ding, ``{Synchronous Optimization Schemes for Dynamic Systems
  through the Kernel-based Nonlinear Observer Canonical Form},'' \emph{{IEEE
  Transactions on Instrumentation and Measurement}}, vol.~71, pp. 1--13, 2022.

\bibitem{maddalena2021deterministic}
E.~T. Maddalena, P.~Scharnhorst, and C.~N. Jones, ``{Deterministic Error Bounds
  for Kernel-based Learning Techniques under Bounded Noise},''
  \emph{{Automatica}}, vol. 134, p. 109896, 2021.

\bibitem{wendland2004scattered}
H.~Wendland, \emph{{Scattered data approximation}}.\hskip 1em plus 0.5em minus
  0.4em\relax {Cambridge university press}, 2004, vol.~17.

\bibitem{srinivas2012information}
N.~Srinivas, A.~Krause, S.~M. Kakade, and M.~W. Seeger,
  ``{Information-theoretic Regret Bounds for Gaussian Process Optimization in
  the Bandit Setting},'' \emph{{IEEE Transactions on Information Theory}},
  vol.~58, no.~5, pp. 3250--3265, 2012.

\bibitem{vivarelli1998studies}
F.~Vivarelli, ``{Studies on the Generalisation of Gaussian Processes and
  Bayesian Neural Networks},'' Ph.D. dissertation, Aston University, 1998.

\bibitem{dai2023can}
X.~Dai, A.~Lederer, Z.~Yang, and S.~Hirche, ``{Can Learning Deteriorate
  Control? Analyzing Computational Delays in Gaussian Process-Based
  Event-Triggered Online Learning},'' in \emph{Proceedings of The 5th Annual
  Learning for Dynamics and Control Conference}, ser. Proceedings of Machine
  Learning Research, vol. 211.\hskip 1em plus 0.5em minus 0.4em\relax PMLR,
  15--16 Jun 2023, pp. 445--457.

\bibitem{hashimoto2022learning}
K.~Hashimoto, A.~Saoud, M.~Kishida, T.~Ushio, and D.~V. Dimarogonas,
  ``{Learning-based Symbolic Abstractions for Nonlinear Control Systems},''
  \emph{{Automatica}}, vol. 146, p. 110646, 2022.

\end{thebibliography}

\end{document}